\documentclass[format=acmsmall, review=false]{acmart}
\pdfstringdefDisableCommands{\def\ast{*}}
\usepackage{acm-ec-24}
\usepackage{booktabs} 
\usepackage[ruled]{algorithm2e} 
\usepackage{threeparttable}

\usepackage{tikz}
\usetikzlibrary{shapes}
\usetikzlibrary{calc}
\usetikzlibrary{decorations.pathreplacing}
\SetAlFnt{\small}
\SetAlCapFnt{\small}
\SetAlCapNameFnt{\small}
\SetAlCapHSkip{0pt}
\IncMargin{-\parindent}

\setcitestyle{authoryear}

\title[Menu-Based Deep Auctions]{\sc{GemNet}: Menu-Based, Strategy-Proof Multi-Bidder Auctions Through Deep Learning}



\author{Tonghan Wang$^\ast$}
\email{twang1@g.harvard.edu}
\author{Yanchen Jiang$^\ast$}\email{yanchen\_jiang@g.harvard.edu}
\author{David C. Parkes}\email{parkes@eecs.harvard.edu}

\affiliation{
  \institution{Harvard University}
  \city{Cambridge}
  \state{MA}
  \country{USA}
}

\begin{abstract}
Automated mechanism design (AMD) uses computational methods for mechanism design. 
\emph{Differentiable economics} is a form of AMD that uses deep learning to learn mechanism designs and has enabled strong progress in AMD in recent years. Nevertheless, a major open problem has been to learn multi-bidder, general, and fully strategy-proof (SP) auctions. We introduce \emph{GEneral Menu-based NETwork} (\name), which significantly extends the menu-based approach of  the single-bidder RochetNet~\citep{duetting2023optimal} to the multi-bidder setting. The challenge in achieving SP is to learn bidder-independent menus that are feasible, so that the optimal menu choices for each bidder do not over-allocate items when taken together (we call this \emph{menu compatibility}). \name\ penalizes the failure of menu compatibility during training, and transforms learned  menus after training through price changes, by considering a set of discretized bidder values  and reasoning about Lipschitz smoothness to guarantee menu compatibility on the entire value space. This approach is general, leaving trained menus that already satisfy menu compatibility undisturbed and reducing to RochetNet for a single bidder. Mixed-integer linear programs are used for menu transforms, and through a number of optimizations enabled by deep learning, including adaptive grids and methods to skip menu elements, we scale to large auction design problems. \name~learns auctions with  better revenue than affine maximization methods, achieves exact SP whereas previous general multi-bidder methods are  approximately SP, and offers greatly enhanced interpretability.
\end{abstract}

\usepackage{amsmath}
\usepackage{mathtools}
\usepackage{amsthm}
\usepackage{wrapfig}

\usepackage[capitalize,noabbrev]{cleveref}

\makeatletter
\def\thmheadbrackets#1#2#3{%
  \thmname{#1}\thmnumber{\@ifnotempty{#1}{ }\@upn{#2}}%
  \thmnote{ {\the\thm@notefont[#3]}}}
\makeatother

\newtheoremstyle{brakets}
  {}
  {}
  {\itshape}
  {}
  {\bfseries}
  {.}
  { }
  {\thmheadbrackets{#1}{#2}{#3}}

\theoremstyle{brakets}

\newtheorem{theorem}{Theorem}

\newtheorem{lemma}[theorem]{Lemma}

\newtheorem{definition}[theorem]{Definition}

\theoremstyle{remark}

\definecolor{darkgreen}{rgb}{0.0, 0.5, 0.0}
\definecolor{darkblue}{rgb}{0.0, 0.5, 1.0}
\usepackage[textsize=tiny]{todonotes}

\newcount\Comments  
\Comments = 1
\newcommand{\kibitz}[2]{\ifnum\Comments=1{\color{#1}{#2}}\fi}

\newcount\CommentsAdd  
\CommentsAdd = 1
\newcommand{\kibitzAdd}[2]{\ifnum\CommentsAdd=1{\color{#1}{#2}}\fi}
\definecolor{english}{rgb}{0.0, 0.5, 0.0}
\definecolor{tw}{rgb}{0.0, 0.0, 0.5}

\newcommand{\tw}[1]{\kibitzAdd{blue}{#1}}
\newcommand{\jfadd}[1]{\kibitzAdd{blue}{#1}}

\usepackage{xcolor}


\usepackage{amsmath,amsfonts,bm}









\def\eqref#1{equation~\ref{#1}}









\def\1{\bm{1}}








\def\vb{{\bm{b}}}

\def\vv{{\bm{v}}}

\def\vz{{\bm{z}}}



\DeclareMathAlphabet{\mathsfit}{\encodingdefault}{\sfdefault}{m}{sl}
\SetMathAlphabet{\mathsfit}{bold}{\encodingdefault}{\sfdefault}{bx}{n}













\DeclareMathOperator*{\argmax}{arg\,max}

\newcommand{\shortn}{\textup{\texttt{-}}}

\newcommand{\shortp}{\textup{\texttt{+}}}

\newcommand{\Tau}{\mathrm{T}}
\newcommand\spname[1]{$\mathtt{#1}$}
\newcommand{\name}{\textsc{GemNet}}
\newcommand{\namekp}{KC}

\newcommand{\neighborjoint}{$\mathcal{N}_{\frac{\epsilon}{2}}$}

\usepackage{multirow}
\usepackage{makecell}
\newcolumntype{L}{>{$}l<{$}}
\newcolumntype{C}{>{$}c<{$}}
\newcolumntype{R}{>{$}r<{$}}

\begin{document}

\begin{titlepage}
\renewcommand{\shortauthors}{Tonghan Wang, Yanchen Jiang, and David C. Parkes}
\maketitle\makeatletter \gdef\@ACM@checkaffil{}
\makeatother
\setcounter{tocdepth}{2} 
\renewcommand{\thefootnote}{\relax}
\footnotetext{\normalsize$^\ast$\footnotesize Equal Contribution.}
\renewcommand{\thefootnote}{\relax} 
\footnotetext{\vspace{0pt}\noindent\hrulefill\vspace{0pt} \\[0pt] This paper received the \textit{Exemplary Paper Award} for the AI track at the Twenty-Fifth ACM Conference on Economics and
Computation (ACM EC ’24), where it appeared as an extended abstract.}
\renewcommand{\thefootnote}{\arabic{footnote}}

\end{titlepage}

\section{Introduction}

Auctions stand as one of the most enduring and thriving economic activities to this day~\citep{cramton2006combinatorial,mcafee1996analyzing,milionis2023myersonian,edelman2007internet,milgrom2004putting}, and the design of revenue optimal auctions forms a cornerstone problem in economic theory.
\if 0
At the core of this problem is the information asymmetry regarding bidders' private values. In the standard independent private valuation model, the value of items to each bidder is private information, and the auctioneer only knows the (not necessarily identical) distributions from which the values are drawn. Bidders, therefore, may report their values strategically and untruthfully if it leads to higher utility for them. To circumvent this, the auctioneer is interested in designing \emph{incentive compatible} (IC) auctions in which it is in bidders' best interest to report their values truthfully. The goal is then to design an IC auction that maximizes revenue.
\fi
 The seminal work~\citep{myerson1981optimal} solves optimal auctions for selling a single item.
 While the long-standing significance of auctions is well-established, it is notable that decades of theoretical exploration have yet to fully unravel the intricacy of optimal auction design.
Considering dominant-strategy incentive compatibility (DSIC, also strategy-proof) auctions, the only analytical results are for variations on the single-bidder setting~\citep{daskalakis2015strong,manelli2006bundling,pavlov2011optimal, giannakopoulos2014duality} and for multiple bidders for the case of two items and value distributions with only two possible values in
their support~\citep{yao2017dominant}.
%


The idea of using computational methods to design auctions and mechanisms, coined {\em automated mechanism design} (AMD), was introduced by~\citet{conitzer2002,conitzer2004}.
Whereas the early literature on AMD~\citep{conitzer2002,conitzer2004} proposed using linear programming (LP) for mechanism design in discrete value domains, a general LP formulation for strategy-proof mechanisms in continuous value domains has remained elusive \citep{conitzer2006computational}. Moreover, even in discrete settings, these LP-based formulations suffer from exponential size, making them computationally impractical as the size of the type space increases.
In recent years, \emph{differentiable economics}~\citep{duetting2023optimal} 
has proposed using deep learning to discover  optimal, strategy-proof auctions and optimal mechanisms more generally. Machine learning pipelines based on neural networks are used as computational techniques for optimizing within function classes, where the functions represent auction or mechanism rules. This methodology aligns with the paradigm of unsupervised learning by employing appropriately defined loss functions and sampling training data from a known type distribution. Ideally, this approach when used to auction design
would satisfy the following three properties: (1) {\em expressive}: revenue-optimal auctions are in the function class represented by the neural network; (2) {\em strategy-proof}: the learned auctions are  DSIC; and (3) {\em multi-bidder}: the framework is able to support multi-item auctions with multiple bidders.

\begin{wrapfigure}[15]{r}{0.52\linewidth}
\centering
  \includegraphics[width=\linewidth]{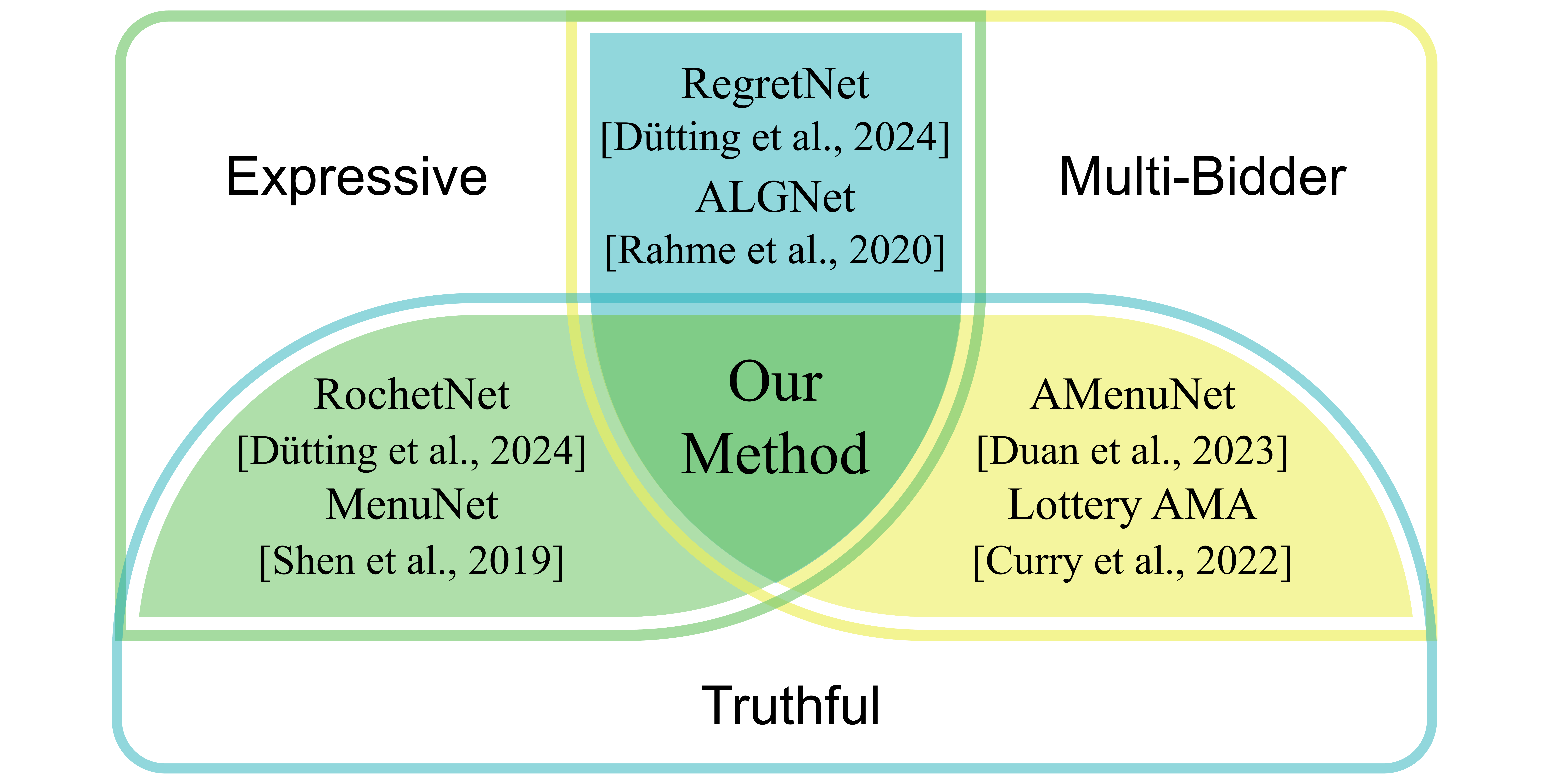}
  \captionsetup{font=footnotesize}
    \caption{{\sc GemNet} is the first differentiable economics method
 that is   generally expressive, truthful (or strategy-proof, DSIC),
 and supports multiple bidders (and items).
    }\label{fig:position}
\end{wrapfigure}

Finding a method that meets all  three criteria is a complex challenge, as illustrated in Fig.~\ref{fig:position}. RegretNet~\citep{duetting2023optimal}, along with subsequent developments such as ALGNet~\citep{rahme2020auction} and RegretFormer~\citep{ivanov2022optimal}, 
are expressive and multi-bidder but seek to minimize the DSIC violation during training 
and lack a guarantee of exact strategy-proofness (SP).
Affine-maximizer auctions (AMAs)~\citep{curry2022differentiable,duan2023scalable} ensure SP
 by appealing to generalized, affine-maximizing Vickrey-Clarke-Groves mechanisms~\citep{vickrey1961counterspeculation, clarke1971multipart, groves}.
 Although applicable to multi-bidder scenarios, AMAs are not fully expressive because they are restricted to affine-maximizing mechanisms.
Menu-based methods such as RochetNet~\citep{duetting2023optimal} and MenuNet~\citep{shen2019automated} are expressive and SP, 
 but only support auctions with a
 single bidder (or settings where the items are non-rival, for example digital content,  where an auction
 with multiple bidders 
 decomposes into  single-bidder problems).

\begin{figure}
    \centering
    \includegraphics[width=\linewidth]{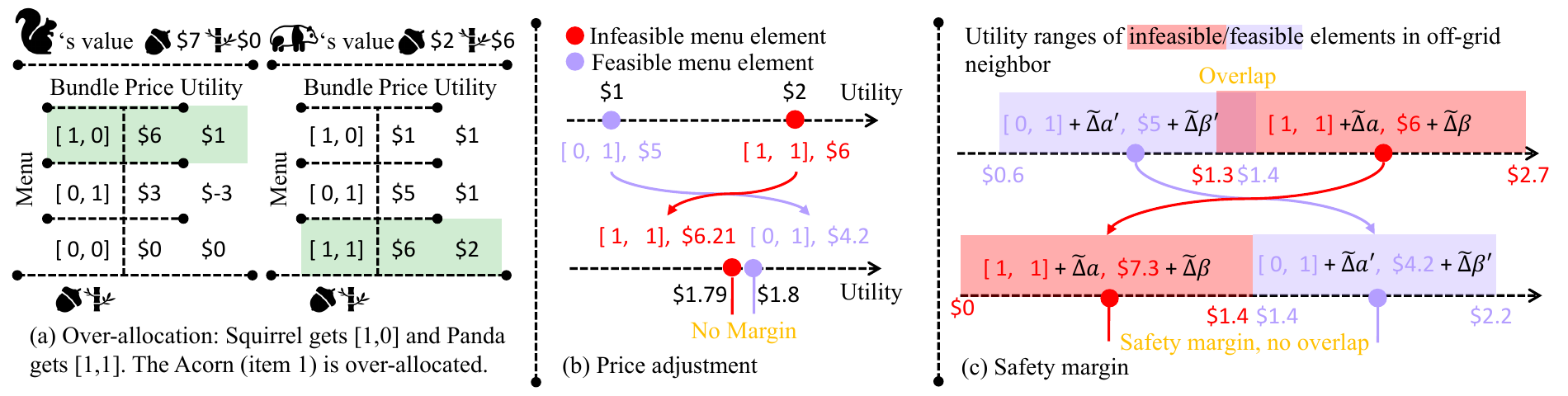}
    \caption{Illustrating our method for additive valuations. (a) An example of incompatible menus where the utility-maximizing menu elements over-allocate the first item. (b) Prices are adjusted so that the feasible menu element, $[0,1]$ in this case, has greater utility than infeasible elements like $[1,1]$. Menus after price adjustment are compatible at grid points.  (c) Extending menu compatibility to off-grid values by introducing safety margins. In this example, the utility of an infeasible bundle $[1,1]$ changes from \$1.3 to \$2.7 between two grid points. Here, the safety margin represents the gap between the utilities of feasible and infeasible menu elements at grid points preventing their utility ranges from overlapping.}
    \label{fig:demo}
\end{figure}
\if 0

In this sense, the quest to 
find a general, deep auction design method 
has left a significant gap: (1) Moving from nearly DSIC demonstrated by RegretNet to achieving strict DSIC 
has remained an unresolved conundrum; (2) Generalizing
\fi
An idea for achieving all three criteria is to generalize the RochetNet architecture to multiple bidders while preserving its core features of {\em self-bid independent menus} (each bidder faces choices that may depend on the bids of others
but not its own bid) and {\em bidder-optimization} 
(each bidder receives an optimal menu choice given its own bid).
This kind of menu-based structure is inherent in SP mechanisms: 
any SP
 mechanism can be represented by self-bid independent menus with bidder-optimization~\citep{hammond1979straightforward}.
%
%
The  challenge in the multi-bidder setting is to ensure that 
 bidder-optimizing menu choices, one for each agent,  do not
 over-allocate items  when taken together. We refer to this
 feasibility property as {\em menu compatibility}. Fig.~\ref{fig:demo}(a) shows an example where over-allocation occurs. Two bidders, \spname{Squirrel} and \spname{Panda}, are each provided with a menu. Each bidder selects the menu item that maximizes their utility: $[1,0]$ for \spname{Squirrel} and $[1,1]$ for \spname{Panda}. However, these choices are incompatible because they over-allocate the first item \spname{Acorn}. 


 
 %
%
Bridging this gap to enable a multi-bidder RochetNet is important for a number of  reasons. First, it  may reveal a new 
structural understanding of optimal auction design, both through the 
interpretability of menu-based descriptions  and 
from achieving the distinguished  property of 
{\em exact} strategy-proofness.
Second, it  drives progress in  understanding  
how well existing solutions such as those coming from AMAs 
and approximately-SP methods such as RegretNet 
 align with  optimal  designs. Third, it may provide new opportunities for operational impact, for example in 
the kinds of systems (advertising and marketing technology, notably) that rely on auctions, through the additional trust that comes through exact SP.

\textbf{Our Contributions}.\ \ We close this gap and  propose the first differentiable economics
method that fulfills all  three criteria. Our method, {\em GEneral Menu-based NETwork} (\name),
makes a significant generalization of  RochetNet  to the multi-bidder, multi-item setting under additive and unit-demand valuations, and is able to guarantee menu compatibility. A neural network is used to learn a 
self-bid independent menu  for each bidder, consisting of a set of bundle-price pairs,
each pair representing a  menu choice 
and corresponding to a bundle of items---actually a distribution on items---and a price.
As we explain, the prices associated with the learned menus are  transformed after
training to ensure menu compatibility over the full domain and SP.  Given these adjusted
menus, the outcome of an auction
corresponds to a bidder-optimizing menu choice for each bidder (i.e.,
a choice from the bidder's menu that maximizes its utility 
based on its  value  for different bundles 
as represented through its bid).
%
For a network with sufficient capacity (i.e., enough hidden layers and nodes), this menu-based computational framework is without loss of generality, in that it can
learn the revenue-optimal auction with access to sufficient training data,
due to the
universality of menu-based representations~\citep{hammond1979straightforward}.
In the special case of a single bidder, \name\ reduces to RochetNet.

 %
The primary issue in achieving SP with a menu-based, multi-bidder architecture 
is  to guarantee menu compatibility. 
Straightforward approaches such as scaling down each bidder's randomized assignment of an item according to the over-assignment  of the item  fails strategy-proofness. This would 
create a dependency between  a first bidder's scaled
menu and the choices of a second bidder, where the choice of this second bidder depends on the 
first bidder’s own reports through the design of the menu for the second bidder.
This interdependence contravenes the conditions of self-bid independence and 
leads to violations of  SP.
%
\begin{figure*}
    \centering
    \includegraphics[width=\linewidth]{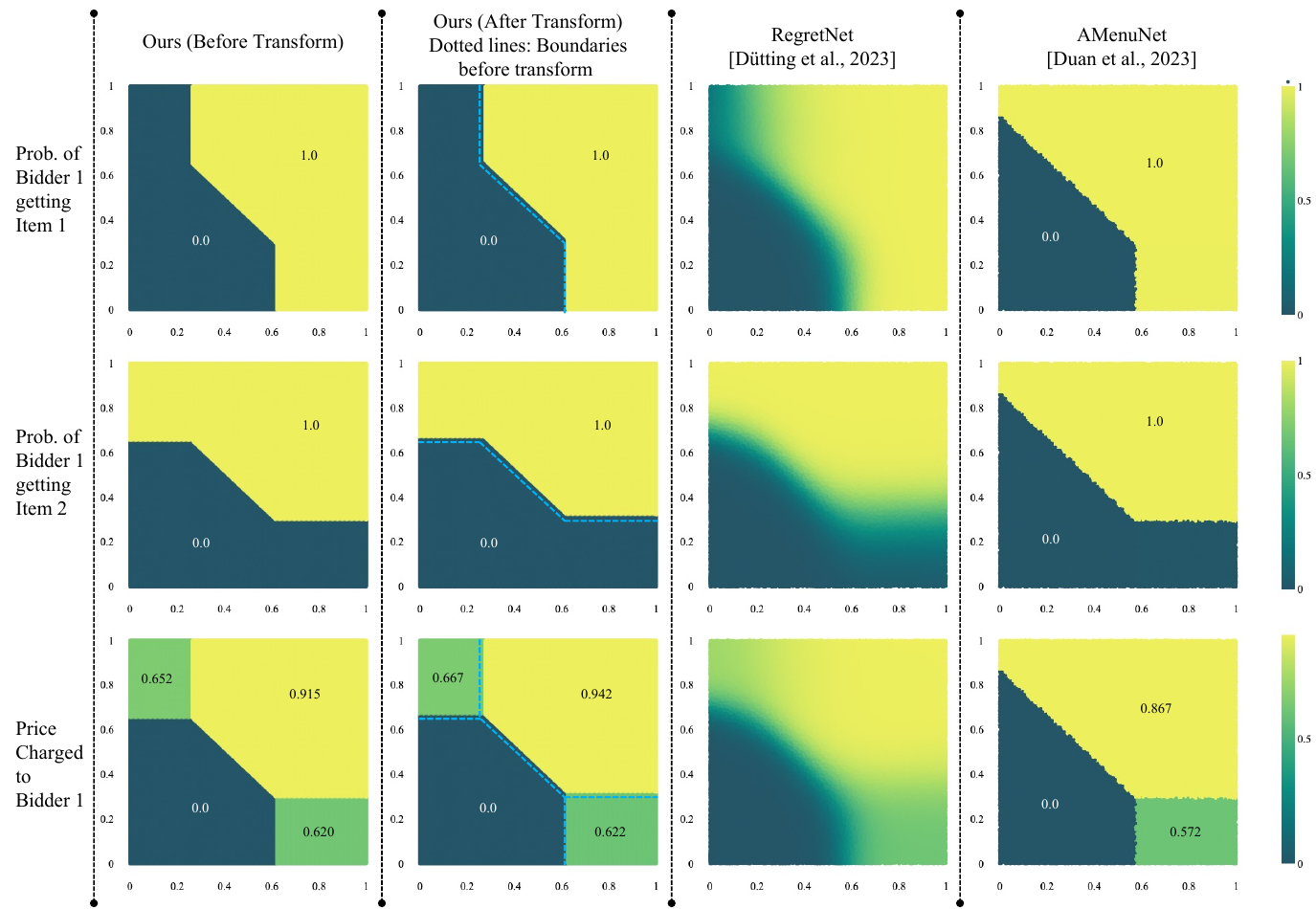}
    \caption{Learned mechanisms in the setting with two additive bidders, two items, and i.i.d.~uniform values on $[0,1]$. 
Bidder 2's values are set at $(0, 0.6)$, and the x- and y-axis in each subplot is Bidder 1's value for Item 1 and 2. Rows 1 and 2 show the probability of Bidder 1 getting Item 1 and Item 2, respectively,  and row 3
shows the price for Bidder 1. Columns represent different methods. Dotted lines in the 
second column distinguish the pre- and post-transformation mechanisms. Compared to RegretNet, \name~exhibits a clear decision boundary and thereby improves the interpretability of designed mechanisms. AMenuNet has a large set of types in the left-top region where Bidder 1 receives no items and makes no payments. In contrast, \name~increases revenue by allocating Item 1 to Bidder 1 within this region, highlighting the improved expressive capacity enabled by our method.
\label{fig:mech_compare_2x2} }
\end{figure*}

In achieving menu compatibility,  we first  develop a suitable loss function to use during training that represents  the goal of maximizing revenue while penalizing loss of menu compatibility. Specifically, we design an \emph{incompatibility loss} that penalizes menus 
 for which the simultaneous choice of bidder-optimizing menu  elements for each  bidder over-allocates items. 
 This incompatibility loss is used  together with a loss term corresponding to negated revenue during training. Although
  the incompatibility loss does not completely
 address menu incompatibility, it substantially lowers the likelihood of over-allocation. 
After training, we then modify as needed the prices associated with the elements of menus, with this menu transformation used to ensure the exact menu compatibility of a deployed  mechanism over the full value domain. Menu transformation makes small changes to prices so that incompatible choices become less appealing
to a bidder than the best compatible choice.
%
 %

 The post-training price transformation involves solving a series of
 mixed-integer linear programs (MILP).
Although the method works on a grid of values for bidders, through Lipschitz smoothness arguments it guarantees menu compatibility on the full, continuous input domain.
 For  bidder $i$, the transform considers a grid of values of the others. Fixing one such value profile $\tilde{v}_{-i}$,  and thus  fixing the menu to bidder $i$,
the transform considers a grid $\mathcal{V}_i$
 of possible values $v_i$ for bidder $i$. This set of values $\mathcal{V}_i$ induces a
 MILP corresponding to $(i,\tilde{v}_{-i})$, where the decision variables are
 the possible price changes   to make
 to certain elements in bidder $i$'s menu at $\tilde{v}_{-i}$. 
To construct this MILP,
 we consider the effect of assuming each $v_i\in \mathcal{V}_i$ as the valuation
 of bidder $i$. As $v_i$ changes, so too does the menu
 of choices offered to each of the other bidders, and for each $v_i$
 we identify the bidder-optimizing menu choice for each of the other bidders at
 $\tilde{v}_{-i}$.
These menu choices by $j\neq i$  may render some elements in bidder $i$'s menu infeasible for this choice of $v_i\in \mathcal{V}_i$, 
while other elements remain feasible.  
For each $v_i\in \mathcal{V}_i$, the MILP includes constraints on
price changes 
for  elements of the menu at $\tilde{v}_{-i}$ to ensure that  
bidder $i$ prefers some feasible  element over every 
infeasible element, with an
objective to minimize price adjustments. 


Similar MILPs are formulated and solved in series for different values $v_{-i}$ of other bidders, then the process proceeds by adopting the viewpoint of each bidder in turn, using the transformed menus for the bidders already considered and making adjustments to their learned menus as needed.
Each transform is always feasible because in the worst-case it can make all the menu elements so expensive that a bidder prefers to choose the empty bundle at zero price. 

Fig.~\ref{fig:demo}(b) shows an example of price adjustments considering a grid of values. We fix the bids of \spname{Squirrel} and consider a grid of values for \spname{Panda}. When \spname{Panda} bids $[2, 6]$, we keep \spname{Squirrel}'s menu unchanged so it continues to choose  bundle $[1,0]$. This renders only one item in \spname{Panda}'s menu feasible. We can change the prices in \spname{Panda}'s menu to make this feasible bundle $[0,1]$ the most attractive option. This process is repeated for all \spname{Panda}'s values, setting up a MILP to determine the specific price changes. We then repeat the procedure for \spname{Squirrel} values in a grid, solving separate MILPs for each case.

This illustration of the method in Fig.~\ref{fig:demo}(b) guarantees menu compatibility over the grid. To extend this compatibility to the entire continuous value space, we also need to accommodate utility changes at off-grid bids. For this, we employ network Lipschitz smoothness to bound the menu changes between grid points, leading to estimated utility ranges for off-grid values. The MILP ensures that the utility of feasible elements exceeds that of infeasible elements by a \emph{safety margin} at a grid point. This margin prevents utility ranges of these elements from overlapping, ensuring that infeasible elements cannot be selected at off-grid values. An example is shown in Fig.~\ref{fig:demo}(c). Compared to the scenario in Fig.~\ref{fig:demo}(b), the required price changes in this case are larger.

At deployment, and for a  bid profile $v$ that may not  be on the grid and some bidder $i$, we feed other bids $v_{-i}$ to the network and change the network-generated prices by adding the price adjustment from the closest grid point in terms of the $\ell_\infty$-norm. From this adjusted menu, bidder $i$ chooses the utility-maximizing bundle.
%

%
 
We prove that by working with safety margins, this process of menu adjustment guarantees menu compatibility across the entire, continuous value domain (Theorem~\ref{thm:2_bidder_comp_updated} and Theorem~\ref{thm:general_comp}) and SP (Theorem~\ref{thm:sp}). The proof also  supports an approach that reduces the grid size by adapting it, locally, to the local Lipschitz constant. Additionally, given that the trained networks exhibit only a small over-allocation rate, it becomes feasible to maintain bidders' choices in the original, trained menu (before adjustment) across the majority of grid points.
 These strategies, among others, enable a significant reduction in the number of binary variables and  the time needed to solve these MILPs.
 For example, we reduce the number of binary variables from 548,866 to 28 on average for an auction with 2 additive bidders, 2 items, and i.i.d.~uniform values from $[0,1]$, which saves $>99.99\%$ of the running time (Table~\ref{tab:eff_strategy}).

The price transformation   leaves learned menus
undisturbed when they are  already menu-compatible. Moreover, the
 transformation    is not manipulable through misreports
 because  the grid points are self-bid independent and we
 ensure menu compatibility across the  entire value domain
and thus bidders cannot benefit by intentionally triggering or avoiding
transformation on some part of the domain.
%

Given that early AMD methods used LP~\citep{conitzer2002,conitzer2004,conitzer2006computational} but were restricted in discrete value domains, it is intriguing to ask whether our price-adjustment method could extend to an MILP-only formulation for SP mechanisms in continuous value domains without relying on deep learning. This question is of broader interests in the AMD community. We find that, although it is possible to formulate such MILPs using our ideas like safety margins and menu compatibility, several challenges need to be addressed to make this formulation practical. For example, deep learning initialization allows us to solely focus on price changes; without it, additional decision variables for bundles must be introduced. Moreover, our techniques that significantly accelerate MILPs depend on deep learning initialization. In its absence, MILP-only methods may encounter prohibitively long running times. Until these challenges are resolved, deep learning remains a more scalable solution with manageable complexity.

%
\if 0
%
%
As long as the capacity of the neural network is large enough,  then 
this  pipeline is without loss of generality, in that it can
learn the revenue-optimal auction 
due to the
universality of menu-based representations~\cite{hammond1979straightforward}
and recognizing that the w
\fi
%
%
\begin{figure}
    \centering
    \includegraphics[width=\linewidth]{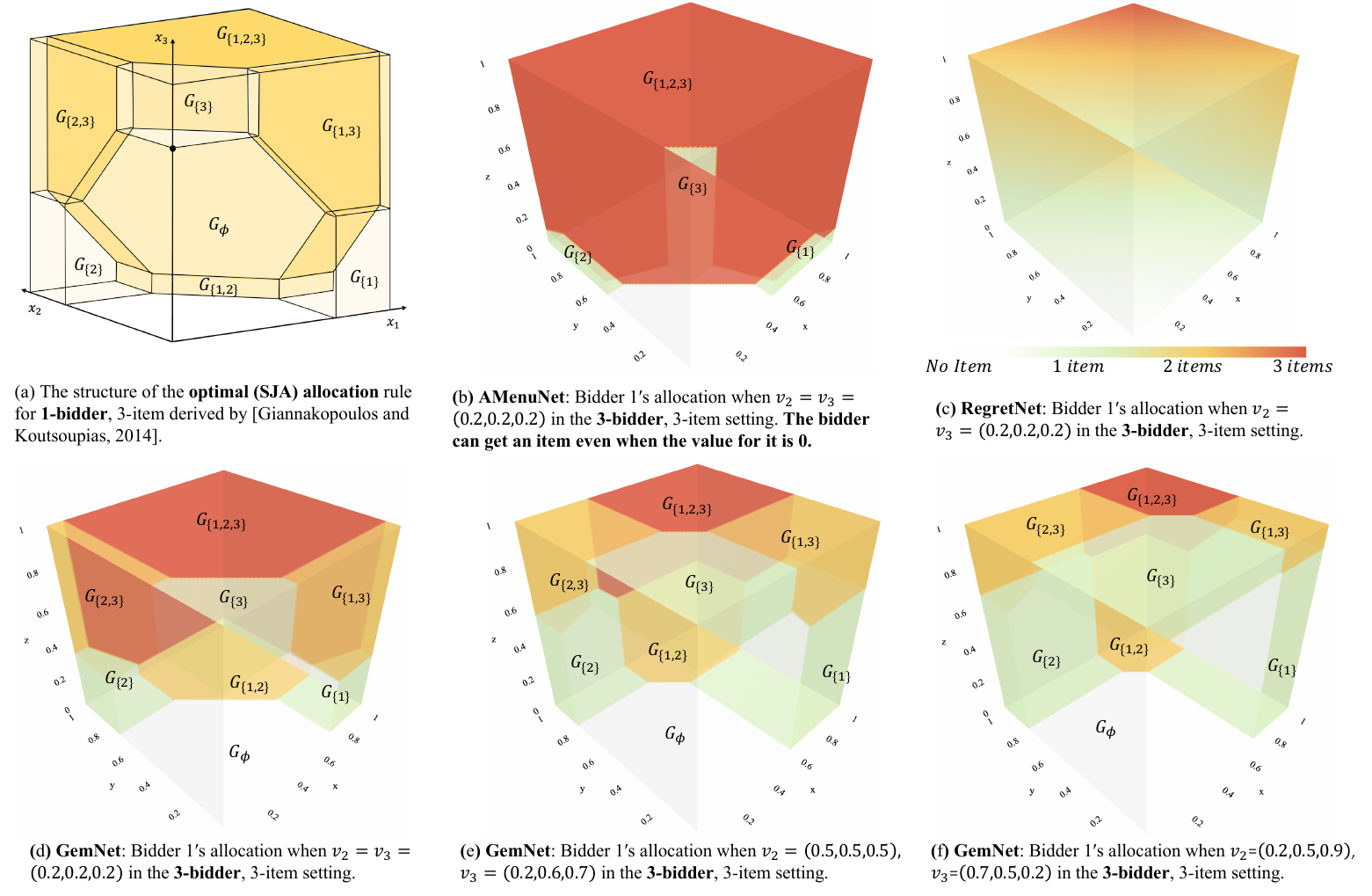}
    \vspace{-2em}
    \caption{Three items, additive bidders, i.i.d.~uniform values on $[0,1]$.
 The value of a particular bidder for each of three items varies in each cube,
    and annotation $G_{\{S\}}$ means the bidder gets  items in set $S$ in a region.
{\bf (a) One bidder.} The optimal allocation structure for the 1-bidder case~\citep{giannakopoulos2014duality}.
  An optimal analytical solution is not known for two or more bidders.
   {\bf (b-d) Three bidders.} Fix the values of Bidders 2 and 3 to $v_2=v_3=(0.2,0.2,0.2)$, showing the allocation for Bidder 1 learned by AMenuNet~\citep{duan2023scalable}, RegretNet~\citep{duetting2023optimal}, and~\name. AMenuNet learns a sub-optimal mechanism, e.g., Bidder 1 gets Item 2 even when its value is 0 (when $v_1(2)=0$, $v_1(1)>0.5$, $v_1(3)>0.2$). It is interesting that \name~obtains an allocation rule conforming to the optimal structure in the 1-bidder case. {\bf (e-f)  Three bidders.}
 \name~adapts the allocation for Bidder 1, while maintaining the 
     highlevel  structure, for different values of Bidders 2 and 3. \label{fig:u33_compare}}
     \vspace{-1.5em}
\end{figure}

The experimental results first provide a detailed investigation of the auctions learned by \name, these set against strong baseline algorithms. The comparison with RegretNet~\citep{duetting2023optimal} demonstrates that \name~brings the advantage of menu-based auction design to multi-bidder settings, with clear and interpretable decision boundaries. The comparison with affine-maximization methods~\citep{duan2023scalable}  shows how the improved expressive capacity of \name~leads to a different mechanism with higher revenue (see Fig~\ref{fig:mech_compare_2x2}, for example). Working in a 3-bidder, 3-item, and uniform valuation setting 
and comparing to the  Straight-Jacket Auction (SJA)~\citep{giannakopoulos2014duality} (Fig.~\ref{fig:u33_compare}), which is optimal for the single-bidder, 3-item, and uniform valuation setting,
 we see 
that \name~generates an auction design for multiple bidders
that is similar to 
 the analytically optimal single-bidder solution  when the other two bidders have
 identical valuations
 and reveals new suggested optimal solutions for asymmetric settings.
Compared with 
 RochetNet~\citep{duetting2023optimal} and AMenuNet~\citep{duan2023scalable},  it is plain that \name~provides enhanced interpretability on the design of optimal auctions.

We also benchmark \name\ on various auction settings studied in the recent differentiable economics literature, scaling up to 8 bidders and 10 items. \name\ consistently outperforms the affine-maximization methods
across all  settings, and gets revenue  slightly below that achieved by RegretNet, which does not generate exactly SP auctions.
Another validation is that \name\ is  the only deep learning method that can recover what,
 to our knowledge, is the only  existing analytical solution
for a multiple-bidder, multi-item SP auction~\citep{yao2017dominant}.
 These findings suggest that \name\ offers improved expressiveness 
 and substantially advances   the state-of-the-art performance that
 can be accomplished through differentiable economics and AMD more generally.




\subsection{Additional related work}

In the context of {\em Bayesian incentive compatibility} (BIC), 
 there have been  advances in automated mechanism design through the use of 
{\em interim} representations and characterization results~\citep{babaioff2020simple,cai2012algorithmic,Cai2012,daskalakis2012symmetries,hartline2011bayesian}. A key
 challenge has been
  to transform an $\epsilon-$BIC mechanism, constructed on a discrete (interim) grid
to an exact BIC mechanism on a continuous type space.
 This is similar in motivation to the transformation in the present paper, where we leverage 
Lipschitz continuity of a trained network in extending 
 menu compatibility on discrete grid points to the entire continuous space.
%
Inspired by~\citet{hartline2011bayesian}, \citet{daskalakis2012symmetries} use a two step process to achieve
exact BIC,
 with an agent type first bidding against copies of itself (drawn from the same type distribution as the agent) 
in a VCG auction to choose the ideal surrogate from those with types  sampled on the grid,
%
with the optimized mechanism then running on
 the chosen surrogates, awarding each agent the allocation and payment of its surrogate.
 This method and proof relies heavily on distributional analysis specific to the Bayesian setting and is unlike 
 our menu-based,  Lipschitz-reasoned transformation.
 A more naïve approach \citep{chawla2007algorithmic}, where each agent's reported values are rounded to the nearest grid points, with the prices discounted to preserve the incentive scheme, works in the single bidder case, but as \citet{daskalakis2012symmetries} remark,  does not generalize to the multi-bidder setting. See also \citet{conitzer2021welfare} and~\citet{cai2021efficient} for  $\epsilon$-BIC  to  BIC 
 transforms with different kinds of theoretical guarantees, including in~\citet{conitzer2021welfare} a transform to BIC for RegretNet in the special case of uniform, independent types. It is worth highlighting that these transforms only work for BIC problems and do not extend to the case of DSIC. To our knowledge, reducing $\epsilon$-expected-expost-IC ($\epsilon$-EEIC, which is what RegretNet converges to) to DSIC remains an open question. \citet{conitzer2021welfare} demonstrate that $\epsilon$-EEIC (Theorem 11) can be reduced to BIC   for independent uniform type distributions. But we know of no result for an $\epsilon$-EEIC to DSIC reduction. The surrogate matching technique discussed in~\citet{daskalakis2012symmetries} is limited to BIC and doesn't provide an $\epsilon$-IC (or $\epsilon$-EEIC) to DSIC reduction (see Appendix K after Lemma 15). See also~\citet{cai2021efficient}.

Other work in automated mechanism design aims to optimize parametric
designs within a restricted family
 of known SP  mechanisms~\citep{guo2010computationally, sandholm2015automated}, e.g., affine maximizer auctions. As with the differentiable approach in \citet{duan2023scalable} and \citet{curry2022differentiable}, this
 restricted search space is not fully expressive.

\citet{guo2024worst} integrates neural networks with mixed integer programming (MIP) in a study of worst-case VCG redistribution mechanism design for the public project problem. He proposes using a neural network to model the payment function, and makes use of MIP to solve for worst-case type profiles that maximally violate mechanism design constraints. In contrast to our approach, where the MIP is used solely to derive price adjustments ensuring feasibility, \citet{guo2024worst} uses these worst-case type profiles as training samples to adapt 
towards better worst-case mechanisms.
 
 Using machine learning approaches, \citet{narasimhan2016automated} consider
 automated design in domains without money, and \citet{dutting2015payment} leverage max-margin  methods 
 to learn an optimally incentive-aligned
 payment rule given an allocation rule (but without finding exactly SP mechanisms except for implementable allocation rules).
This paper also sits within the growing body of work integrating machine learning with economic theory and practice, e.g., \citet{hartford2016deep, kuo2020proportionnet, ravindranath2023data, wang2023deep, ravindranath2021deep, tacchetti2019neural, peri2021preferencenet, rahme2021permutation, wang2024multi}. 
\section{Preliminaries}

\textbf{Sealed-Bid Auction}. We consider sealed-bid auctions with a set of $n$ bidders, $N=\{1,\ldots,n\}$, and $m$ items, $M=\{1,\ldots,m\}$, 
where bidder $i$ 
has a {\em valuation function}, $v_i: 2^M \rightarrow \mathbb{R}_{\ge 0}$. 
We consider two kinds of valuation functions. For bidders with \emph{additive valuations}, the value for a subset of items $S\subseteq M$ is $v_i(S)=\sum_{j\in S}v_i({j})$, where $v_i({j})$ is the value for an item $j\in M$. Alternatively, bidders can have \emph{unit-demand valuations}, where $v_i(S)=\max_{j\in S}v_i({j})$ for some $S\subseteq M$.
Valuation $v_i$ is drawn independently from a distribution $F_i$ defined on the space of possible valuation functions $V_i$. We consider bounded valuation functions: $v_i(j)\in[0, v_{\max}]$, for $\forall i$, $j$, with $v_{\max}>0$.
%
We use $\vv_i=(v_i(1), \cdots, v_i(m))$ to denote the value of bidder $i$ for 
each of the $m$ items.

The auctioneer is assumed to know the distribution $\bm F=(F_1,\cdots,F_n)$, but not the realized valuation profile $\vv=(v_1,\cdots,v_n)\in V$. The bidders report their valuations, perhaps untruthfully, as their {\em bids},
 $\bm{b}=(b_1,\cdots,b_n)$, where $b_i\in V_i$.
The goal is to design an auction $(g,p)$ to maximize expected revenue.
This is an {\em allocation rule}, $g: V\rightarrow \mathcal{X}$,  where
$\mathcal{X}$ is the space of feasible allocations (i.e., no item allocated more than once),
and 
a {\em payment rule}, $p_i: V\rightarrow \mathbb{R}_{\ge 0}$, to each bidder $i$. 
We also write $g_i(\bm b)\subseteq M$
 to denote the set of items (perhaps empty) allocated to bidder $i$ at bid profile $\bm b$.
The utility to bidder $i$ with valuation function $v_i$ at bid profile $\bm b$ is $u_i(v_i;\bm b)=v_i(g_i(\bm b))-p_i(\bm b)$.
 In full generality, the allocation and payment rules may be randomized, with
 each bidder assumed to be risk neutral  and seeking
to maximize its 
expected utility.

In a \emph{dominant-strategy incentive compatible} (DSIC) auction, or {\em strategy-proof (SP)} auction, each bidder's utility is maximized by bidding its true valuation $v_i$ regardless of other bids; i.e., $u_i(v_i;(v_i, \bm b_{\shortn i}))\ge u_i(v_i;(b_i, \bm b_{\shortn i}))$, for $\forall i\in N, v_i\in V_i, b_i\in V_i$, and $\bm b_{\shortn i}\in V_{\shortn i} $.
An auction is \emph{individually rational} (IR) if each bidder receives a non-negative utility when participating and truthfully reporting: $u_i(v_i;(v_i, \bm b_{\shortn i}))\ge 0$, for $\forall i\in N, v_i\in V_i$, and $\bm b_{\shortn i}\in V_{\shortn i}$.
Following the revelation principle, it is without loss of generality to focus on direct, SP
auctions, as any auction that achieves a particular expected revenue in a dominant-strategy equilibrium
 can be transformed into an SP auction with the same revenue.
 Optimal auction design therefore seeks to identify a SP and IR auction that maximizes the expected revenue, i.e., $\mathbb{E}_{\vv\sim \bm F}[\sum_i p_i(\vv)]$. 

\textbf{Menu-Based Auction Design}. In {\em menu-based auction design}, the allocation and payment rules for 
each bidder $i$ are 
represented through a {\em menu function}, $B_i(\bm b)$, that 
generates a menu to bidder $i$ consisting of
$K\ge 1$  {\em menu elements} (in general, $K$ may vary by bidder $i$ and
input $\bm b$)
 for every possible bid profile $\bm b$. 
We write $B_i(\bm b)=(B_i^{(1)},\ldots,B_i^{(K)})$, 
and the  $k$th \emph{menu element}, $B_i^{(k)}$,
 specifies a {\em bundle},
 $\bm\alpha^{(k)}_i\in [0,1]^m$, and a {\em price}, $\beta^{(k)}_i\in \mathbb{R}$,
 to bidder $i$.
Here, we allow randomization, where  $\alpha^{(k)}_{ij}\in[0,1]$  denotes the
  probability that item $j$ is assigned to agent $i$ in 
  menu element $k$. 
  %
   %
  %
We refer to menu functions $B=(B_1,\ldots,B_n)$ as corresponding to a {\em menu-based representation 
of an auction.}
  The following theorem gives a necessary 
 condition for a menu-based 
  representation of an auction  to be SP. Let $b_i(\bm\alpha^{(k)}_i)$ denote the expected value (as represented by the bid $b_i$ of bidder $i$) for the possibly
randomized bundle $\alpha^{(k)}_i$.
\begin{theorem}[SP auctions via menus (necessary)~\citep{hammond1979straightforward}]\label{thm:menu}
    An auction $(g, p)$ is SP only if there is a menu-based representation, $B=(B_1,\ldots,B_n)$,  that satisfies, 
    
    1. {\em (Bidder Optimizing)} For every bidder $i$, and  every bid profile, $\bm b$,
    \begin{align}
        (g_i(\bm b), p_i(\bm b)) \in \argmax_{\left(\bm\alpha^{(k)}_i, \beta^{(k)}_i\right)\in B_i(\bm b)}[b_i(\bm\alpha^{(k)}_i)-\beta^{(k)}_i]; \ \mbox{and}
    \end{align}
    
    2. {\em (Self-Bid Independent)} For each bidder $i$, the menu function, $B_i(\bm b)$, is  independent of their own bid, $b_i$,
    that is $B_i(b_i,\bm b_{\shortn i})= B_i(b'_i,\bm b_{\shortn i})$, for all $b_i, b'_i$, and
     $\bm b_{\shortn i}$.
\end{theorem}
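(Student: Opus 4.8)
The plan is to prove this necessary condition constructively, via the taxation-principle argument. Given an SP auction $(g,p)$, for each bidder $i$ and each partial profile $\bm b_{\shortn i}$ I would take the menu offered to $i$ to be the \emph{range of outcomes} that $i$ can induce by unilaterally varying its own report:
\[
B_i(\bm b) \;:=\; \bigl\{\,\bigl(g_i(b'_i,\bm b_{\shortn i}),\, p_i(b'_i,\bm b_{\shortn i})\bigr) \;:\; b'_i \in V_i \,\bigr\}.
\]
With this definition the Self-Bid Independent property of the theorem is immediate, since the right-hand side does not depend on $b_i$; indeed $B_i(b_i,\bm b_{\shortn i}) = B_i(b'_i,\bm b_{\shortn i})$ for all $b_i,b'_i$ by construction.

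The second step is to verify the Bidder Optimizing property. Fix a bid profile $\bm b=(b_i,\bm b_{\shortn i})$. The realized outcome $(g_i(\bm b),p_i(\bm b))$ is itself an element of $B_i(\bm b)$ --- take $b'_i=b_i$ in the construction --- so the $\argmax$ is over a set that contains the candidate, and it suffices to show the candidate attains the maximum value. An arbitrary element of $B_i(\bm b)$ has the form $(g_i(b'_i,\bm b_{\shortn i}),p_i(b'_i,\bm b_{\shortn i}))$ for some $b'_i\in V_i$, and its value under the bid $b_i$ is $b_i\bigl(g_i(b'_i,\bm b_{\shortn i})\bigr)-p_i(b'_i,\bm b_{\shortn i}) = u_i\bigl(b_i;(b'_i,\bm b_{\shortn i})\bigr)$, by the definition of $u_i$ in the preliminaries (valuations are evaluated in expectation over the possibly randomized bundle, which is exactly what $b_i(\bm\alpha)$ denotes). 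Now I would apply the DSIC inequality with the true type taken to be $b_i$: $u_i\bigl(b_i;(b_i,\bm b_{\shortn i})\bigr)\ge u_i\bigl(b_i;(b'_i,\bm b_{\shortn i})\bigr)$ for every $b'_i$. This says precisely that the realized outcome weakly dominates every menu element under $b_i$, hence $(g_i(\bm b),p_i(\bm b))\in\argmax_{(\bm\alpha,\beta)\in B_i(\bm b)}[b_i(\bm\alpha)-\beta]$, completing the argument.

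There is no genuine obstacle here; the work is entirely in the bookkeeping. The two points to handle with care are: (i) when $V_i$ is continuous the menu can be an infinite set, so one should not invoke attainment of a supremum in general --- attainment is automatic in our argument because the realized outcome is always one of the menu elements; and (ii) one must evaluate valuations on (possibly randomized) bundles consistently, which the preliminaries already do by letting $b_i(\bm\alpha)$ mean the expected value of $b_i$ over the lottery $\bm\alpha$, so the identity $u_i(b_i;(b'_i,\bm b_{\shortn i})) = b_i(g_i(b'_i,\bm b_{\shortn i})) - p_i(b'_i,\bm b_{\shortn i})$ holds verbatim and no randomization-specific reasoning is needed. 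The construction and the inequality are uniform over additive and unit-demand valuations and require no distributional assumptions, so the same two-step argument yields the result in full generality.
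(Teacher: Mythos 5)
Your proof is the standard taxation-principle construction --- take $B_i(\bm b)$ to be the range of outcomes bidder $i$ can induce by unilateral deviation, observe self-bid independence by construction, and derive bidder-optimization directly from the DSIC inequality --- and it is correct, including your observations about attainment of the $\argmax$ and evaluating randomized bundles in expectation. The paper does not supply its own proof of this theorem, deferring instead to the cited result of \citet{hammond1979straightforward}, and the argument you give is precisely the classical one behind that citation.
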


Given a menu-based representation, let $\bm\alpha^{*}_i(\bm b)$ and $\beta^{*}_i(\bm b)$ denote the
bundle and price components of the  bidder-optimizing menu element 
for bidder $i$   at  bid profile $\bm b$. In cases where the context is unambiguous, we omit the explicit dependence on $\vb$ and write $\bm\alpha^{*}_i$ and $\beta^{*}_i$.
%
We now define  {\em menu compatibility},  which  is required for a menu-based representation
to provide a SP auction.
\begin{definition}[Menu compatibility]
The menus available to each bidder are {\em  menu compatible at bid profile $\bm b$}  if the  bidder-optimizing choices, $\bm\alpha^{*}(\bm b)=(\bm\alpha^{*}_1(\bm b),\ldots,\bm\alpha^{*}_n(\bm b))$, are free from over-allocation, so that $\sum_i \bm\alpha^{*}_i(\bm b)\le 1$.
  A menu-based representation $B=(B_1,\ldots,B_n)$ 
  is {\em menu compatible} if the menus are menu compatible 
   for every bid profile $\bm b\in V$.
\end{definition}

In words, menu compatibility requires that the bidder-optimizing choices for each bidder from its menu,
at each input to the auction, are always feasible when considered together, in that no item is over-allocated.
\begin{theorem}[SP auctions via menus (sufficient)~\citep{hammond1979straightforward}]\label{thm:menu2}
   An auction defined   through a menu-based representation, $B=(B_1,\ldots,B_n)$, is SP 
  if the menu-based representation is self-bid independent, choices are bidder optimizing,
  and these choices satisfy menu compatibility.
\end{theorem}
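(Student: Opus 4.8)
The plan is to separate two concerns: (i) that menu compatibility makes the induced auction $(g,p)$ well-defined, i.e.\ that $g$ maps into the feasible set $\mathcal X$; and (ii) that self-bid independence together with bidder-optimization forces truthful reporting to be dominant. The incentive argument in (ii) will not use menu compatibility at all — it enters only in (i) — and the crux will be a one-line observation about why a misreport cannot enlarge a bidder's option set.

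For (i), I would use the bidder-optimizing condition of Theorem~\ref{thm:menu} to identify, at every bid profile $\bm b$, the pair $(g_i(\bm b), p_i(\bm b))$ with the bundle-price components $(\bm\alpha^{*}_i(\bm b), \beta^{*}_i(\bm b))$ of a utility-maximizing menu element for bidder $i$. Thus the allocation produced is exactly $\bm\alpha^{*}(\bm b)=(\bm\alpha^{*}_1(\bm b),\ldots,\bm\alpha^{*}_n(\bm b))$, and menu compatibility gives $\sum_i \bm\alpha^{*}_i(\bm b)\le \bm 1$ componentwise. Since each $\bm\alpha^{*}_i(\bm b)\in[0,1]^m$ and the per-item marginals sum to at most one, this is a feasible (randomized) assignment — e.g.\ sample each item $j$'s recipient independently, giving it to bidder $i$ with probability $\alpha^{*}_{ij}(\bm b)$ and leaving it unassigned otherwise — so $g(\bm b)\in\mathcal X$ and $(g,p)$ is a bona fide auction.

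For (ii), fix an arbitrary bidder $i$, a true valuation $v_i\in V_i$, and an arbitrary profile $\bm b_{-i}$ of the others' bids. By self-bid independence the menu offered to bidder $i$ is one and the same set, call it $\mathcal B$, for every report $b_i$ — it may depend on $\bm b_{-i}$ but not on $b_i$. Reporting truthfully, bidder-optimization yields $(g_i(v_i,\bm b_{-i}),p_i(v_i,\bm b_{-i}))\in\argmax_{(\bm\alpha,\beta)\in\mathcal B}[\,v_i(\bm\alpha)-\beta\,]$, so $u_i(v_i;(v_i,\bm b_{-i}))=\max_{(\bm\alpha,\beta)\in\mathcal B}[\,v_i(\bm\alpha)-\beta\,]$. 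For any alternative report $b_i$, the resulting outcome $(g_i(b_i,\bm b_{-i}),p_i(b_i,\bm b_{-i}))$ is still an element of the very same menu $\mathcal B$ (it is a maximizer of $b_i(\cdot)-(\cdot)$ over $\mathcal B$, but it is drawn from $\mathcal B$), hence $u_i(v_i;(b_i,\bm b_{-i}))=v_i(g_i(b_i,\bm b_{-i}))-p_i(b_i,\bm b_{-i})\le\max_{(\bm\alpha,\beta)\in\mathcal B}[\,v_i(\bm\alpha)-\beta\,]=u_i(v_i;(v_i,\bm b_{-i}))$. As $i$, $v_i$, and $\bm b_{-i}$ were arbitrary, this is exactly the SP condition.

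There is no hard technical obstacle here; the argument is short once the pieces are assembled. The one point worth stressing — and the thing that makes the whole framework tick — is that self-bid independence is what defeats manipulation: because a misreport leaves bidder $i$'s option set $\mathcal B$ untouched, the best any report can achieve is the value of the element of $\mathcal B$ that is optimal under $v_i$, which is precisely what bidder-optimization under truthful reporting delivers. Menu compatibility is logically orthogonal to this comparison; its sole role is to guarantee that the bidder-optimizing choices across bidders can be realized simultaneously, so that the map $(g,p)$ we have been reasoning about is genuinely a feasible auction rather than an infeasible assignment.
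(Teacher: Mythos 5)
Your argument is correct, and it is the standard ``taxation principle'' proof one would give for this result. The paper itself does not prove Theorem~\ref{thm:menu2}: it is stated as a known characterization and attributed to \citet{hammond1979straightforward}, so there is no in-paper proof to compare against. Your decomposition — menu compatibility only certifies that the profile of bidder-optimizing choices lies in $\mathcal{X}$ (realizable as a lottery since the per-item marginals sum to at most one), while self-bid independence plus bidder-optimization alone yield $u_i(v_i;(b_i,\bm b_{-i}))\le \max_{(\bm\alpha,\beta)\in B_i(\bm b_{-i})}[v_i(\bm\alpha)-\beta]=u_i(v_i;(v_i,\bm b_{-i}))$ because a misreport cannot change the menu and can only select a (possibly $v_i$-suboptimal) element of it — is exactly the right way to see why the two hypotheses play logically independent roles, and it matches how the paper informally uses the theorem (e.g.\ in the proof of Theorem~\ref{thm:sp}). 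The only caveat worth a sentence in a fully rigorous write-up is the tie-breaking issue the paper flags after the theorem statement: when the argmax is not a singleton, ``the'' bidder-optimizing choice must be selected consistently with the compatibility guarantee, which the paper sidesteps by enforcing a strict utility gap of $s_m$ in the MILP.
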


In full generality, one needs to also handle tie-breaking in defining menu compatibility (i.e., 
there may be ties in bidder-optimizing choices, and it is sufficient
that there always exists a way to break ties across equally good, bidder-optimizing choices, so as to not to over-allocate).
We do not need this because our computational framework,  in the construction of the MILP for price adjustments,
ensures that the best menu choices to bidders  are (1) menu compatible, and (2) 
for each bidder, have a utility 
that is {\em strictly} larger than that of the second best choice with a safety margin that
depends on the Lipschitz constant of the trained network
and the distance between grid points.

In the context of menu-based auction design, the 
optimal auction design problem is to identify a menu-based representation that 
maximizes  expected revenue, i.e., $\mathbb{E}_{\vv\sim \bm F}[\sum_i \beta^{*}_i(\vv)]$
amongst all self-bid independent and menu compatible 
representations.

\section{The General Menu-Based NETwork and Price Transformation Method}

In this section, we introduce our method for how to learn revenue-maximizing, SP, and multi-bidder auctions. Our method comprises two key components. Sec.~\ref{sec:method:dl} details the training of neural networks for self-bid independent, menu-based representations, where we introduce an {\em incompatibility loss function} that encourages  menu compatibility.
In practice, we find that this may leave  a small over-allocation rate and thus a small failure of menu compatibility.
 To address this issue and ensure menu compatibility on the entire value domain, Sec.~\ref{sec:method:t} introduces a
novel  {\em menu transformation} technique. In Sec.~\ref{sec:thoery}, we prove that the auction design after transformation 
is menu compatible on the entire input domain, and thus exact SP.

\subsection{Deep Menu Learning}
\label{sec:method:dl}


We train a neural network $B(\theta)$ with parameterization $\theta$ to learn a menu-based representation that maximizes revenue while also driving 
down the rate of menu incompatibility.
 Specifically, $B(\theta)$ consists of a pair of neural network components, $(f_{\xi_i}, q_{\zeta_i})$, 
for each bidder $i$, where these
are parameterized by $\theta_i=(\xi_i, \zeta_i)$.
%
$f_{\xi_i}$ is the {\em bundle network} for bidder $i$ and 
generates the bundles associated with each menu element,
 and $q_{\zeta_i}$ is the {\em price network} for bidder $i$
 and generates the prices associated with 
each menu element.
 To meet the requirement of SP (Theorem~\ref{thm:menu2}), the inputs to $f_{\xi_i}$ and $q_{\zeta_i}$ 
 only depend on the bid profile,  $\vb_{\shortn i}$, of the other bidders.
$f_{\xi_i}$ generates a set of $K- 1$ bundles, $\bm\alpha_i(\vb_{\shortn i};\xi_i)=f_{\xi_i}(\vb_{\shortn i})\in[0,1]^{(K\shortn 1)\times m}$,
and  $q_{\zeta_i}$  a set of $K-1$ prices, $\bm\beta_i(\vb_{\shortn i};\zeta_i)=q_{\zeta_i}(\vb_{\shortn i})\in\mathbb{R}^{K-1}$. The $k$-th row of the output of the bundle network  and the $k$-th element of the output of the  price network  make up the $k$-th element in the menu for bidder $i$, 
$(\bm\alpha_i^{(k)}(\vb_{\shortn i}), \beta_i^{(k)}(\vb_{\shortn i}))$. Appendix~\ref{appx: nntraining} gives a detailed description and visualization of the network architecture.
 Hereafter we omit the network parameters for simplicity. To ensure individual rationality (IR), we also include a fixed $K$-th element, with $\bm\alpha_i^{(K)}(\vb_{\shortn i})=\bm 0$, $\beta_i^{(K)}(\vb_{\shortn i})=\bm 0$. 
 
 During training we assume truthful inputs since we will attain a SP mechanism, and minimize:
\begin{align}
    \mathcal{L}(\theta) = -\mathcal{L}_{\textsc{Rev}}(\theta) + \lambda_{\textsc{Incomp}}\cdot 
\mathcal{L}_{\textsc{Incomp}}(\theta)\label{equ:loss}.
\end{align}
The first term in this loss function is for maximizing the empirical revenue, i.e., 
\begin{align}
    \mathcal{L}_{\textsc{Rev}}(\theta) = \frac{1}{|D|}\sum_{\vv\in D}\left[\sum_{i\in N}\sum_{k\in[K]}\vz_i^{(k)}(\vv)\beta_i^{(k)}(\vv_{\shortn i}) \right],
\end{align}
where $D$ is a set of bidders' values sampled from $\bm F$ and $\vz_i^{(k)}(\vv)$ is obtained by applying the differentiable
SoftMax function to the utility of bidder $i$ being allocated the $k$-th menu choice, i.e.,
\begin{align}
    \vz_i^{(k)}(\vv) = \mathsf{SoftMax}_k\left(\lambda_{\textsc{SoftMax}}\cdot u^{(1)}_i(\vv),\ldots,\lambda_{\textsc{SoftMax}}\cdot u^{(K)}_i(\vv)\right).\label{equ:softmax_in_loss}
\end{align}

For additive valuations, the utility $u^{(k)}_i(\vv) = \vv_i^\Tau\bm\alpha_i^{(k)}(\vv_{\shortn i}) - \beta_i^{(k)}(\vv_{\shortn i})$. This also applies to unit-demand bidders with the constraint $\sum_j\alpha_{ij}^{(k)}=1$. Also,
$\lambda_{\textsc{SoftMax}}>0$ is a scaling factor that controls the quality of the approximation. 

\textbf{Incompatibility loss}. Until this point, the network formulation
generalizes RochetNet~\citep{duetting2023optimal},
 allowing the menu  of each 
 bidder to depend on the bids of others.
 The new challenge  in the multi-bidder setting is that bidder optimizing choices may be  incompatible, with one or more items needing to be over-allocated, when taking together the choices of each bidder.
%
To address this issue,  the  second term in the loss function (Eq.~\ref{equ:loss}) is 
the {\em incompatibility loss}:
%
\begin{align}
    \mathcal{L}_{\textsc{Incomp}}(\theta)=\frac{1}{|D|}\sum_{\vv\in D}\left[\mathsf{ReLU}\left(\sum_{i\in N, k\in[K]}\vz_i^{(k)}(\vv)\bm\alpha_i^{(k)}(\vv_{\shortn i}) - (1 - s_f)\right)\right],\label{equ:f_loss}
\end{align}
where   $s_f>0$ is a 
{\em safety margin},
and
this loss term is associated with scaling factor, $\lambda_{\textsc{Incomp}}>0$, in Eq.~\ref{equ:loss}, balancing revenue and menu incompatibility. \if 0 \jfadd{This formulation is essentially equivalent to the augmented Lagrangian method, commonly used in the literature on differentiable economics (e.g., \cite{duetting2023optimal}), except that it omits the quadratic penalty term typically used to enforce constraints. In practice, the incompatibility loss serves the same role, penalizing over-allocation in a soft manner. The desired objective in this framework is to minimize
\[
\min_{\theta} -\mathcal{L}_{\textsc{Rev}}(\theta) \quad \text{subject to} \quad \mathcal{L}_{\textsc{Incomp}}(\theta) = 0,
\]
\noindent
which ensures feasibility of the allocations. This is precisely what the Lagrangian formulation for loss in Eq. \ref{equ:loss} achieves: by introducing the incompatibility term as a soft constraint with a penalty, we can relax the hard constraint \(\mathcal{L}_{\textsc{Incomp}} = 0\) and solve the optimization problem in a differentiable manner. Thus, the losses in this setting naturally follow the structure of Lagrangian-based optimization methods, where the incompatibility term acts as a Lagrange multiplier that enforces feasibility while balancing the revenue maximization objective.
} \tw{Tonghan: Personally, I would like to suggest not establishing a strong connection, but rather making a subtle nudge. The main reason is our deviation from the standard setup of the extended Lagrangian. Except for the difference noted by Jeff, we didn't learn the lambda value here. Also taking into consideration the context of this subsection, I would prefer adding a shorter and more restrained version at the end of this paragraph.}
\fi
This incompatibility loss sums up the SoftMax-weighted bundles selected by each bidder and encourages compatible choices, with a positive  loss
only when the summed allocation is larger than $1-s_f$ for one or more items. The loss (Eq.~\ref{equ:loss}) follows the commonly used structure of Lagrangian-based optimization methods, where the incompatibility term acts as a Lagrange multiplier that encourages feasibility.

%

%

\textbf{Safety margin}. 
 %
We achieve full menu compatibility over the domain through the
 menu transformation technique   introduced in Sec.~\ref{sec:method:t}. 
 There, we setup a series of MILPs to enforce menu compatibility on a grid of bidder values.
 With the {\em Lipschitz constant} of the bundle network, $L_a>0$,
 and the interval (in $\ell_\infty$-norm) between two grid points used in the price adjustment, $\epsilon>0$, we prove in Sec.~\ref{sec:thoery} that a safety margin $s_f = n\cdot\epsilon\cdot L_a/2$ 
(where $n$ is the number of bidders and $s_f$ is also used in the construction of the MILP)
allows  menu compatibility on the grid to extend to the entire continuous value domain.
%
%
A technical challenge is that a large safety margin 
 can lead to reduced  revenue. To limit the size of the safety margin,
 we constrain the Lipschitz constant of the bundle network using
 {\em spectral normalization}~\citep{miyato2018spectral}.
 This technique, when applied to fully connected neural networks, involves dividing the weight matrix of each layer by its largest singular value. In our experiments, the Lipschitz constant $L_a$ 
 is often in range $1e\shortn 5$ to  $1e\shortn 4$ following spectral normalization. 
 Sec.~\ref{sec:exp_setup} gives more
 information about our network architecture and training schemes.


\subsection{Price Adjustment as Menu Transformation}\label{sec:method:t}

The trained bundle and price networks generate for each bidder $i$ a menu $B_i(\vv_{\shortn i}; \theta_i)=(\bm\alpha_i(\vv_{\shortn i}), \bm\beta_i(\vv_{\shortn i}))$ given $\vv_{\shortn i}$. Empirically, we find that these menus suffer from a small rate of incompatibility (i.e., item over-allocation). 
To address this problem, we introduce a  menu transformation technique, applied after training but before using the trained mechanism,
to modify the prices in menus and
ensure menu compatibility.

The menu transformation  adjusts the prices  in a menu. 
%
 For this, we generate a grid of bidders' values $\mathcal{V}=\mathcal{V}_i\times\mathcal{V}_{\shortn i}$, which we use  to define a series of mixed integer linear programs (MILPs). These programs are designed to determine  price changes that
 ensure bidder-optimizing choices are compatible for all grid points.
 We prove in Sec.~\ref{sec:thoery}
 that  menu compatibility on this discrete grid  extends to the entire continuous value domain.
The transformed mechanism is  defined on the entire continuous value domain.
For a general input $\vv'$ that  may not be on the grid and some bidder $i$,
 we use the outputs from
the trained bundle and price networks at $\vv'_{\shortn i}$ to obtain a pre-transformed menu for $i$,
and then 
adjust the  menu using the price adjustments from the closest grid point in terms of $\ell_\infty$-norm, i.e., $\forall \vv'\in\{\vv''\,\, |\,\, \|\vv''-\vv\|_\infty\le \epsilon/2\}$ uses the price adjustments at grid point $\vv$.
%
 
We first illustrate the method in the two-bidder case. For each $\vv_{\shortn i}\in\mathcal{V}_{\shortn i}$ , we construct a separate MILP, where the decision variables are adjustments to the prices $\Delta\beta^{(k)}_i, k\in[K\shortn 1]$, 
 the constraints are to ensure that for any $\vv_i\in\mathcal{V}_i$, the two bidders will  make compatible choices, and  the objective is to minimize the sum absolute price adjustment.
%
%
As $\vv_i$ varies, the utility-optimizing choice of $i$ may vary, 
and so too may the menu of bidder $\shortn i$ and thus the choice of bidder $\shortn i$.
We achieve  compatibility between the choices by bidder $i$ at grid $\mathcal V_i$ and bidder $\shortn i$ at value $\vv_{\shortn i}$.

Let $\mathcal{V}_i=\{\vv_{i,(\ell)}\}_{\ell=1}^{|\mathcal{V}_i|}$. For each $\vv_{i,(\ell)}\in \mathcal{V}_i$ (all $\vv_{i,(\ell)}$ share the same menu $B_i(\vv_{\shortn i})$ as $\vv_{\shortn i}$ is fixed), there are two kinds of menu elements: 

(i) $k\in B_{i,(\ell)}^{\mathsf{comp}}(\vv_{\shortn i})$, which means the $k$-th menu element is compatible with the menu element selected by bidder $\shortn i$ given $\vv_{i,(\ell)}$: 
\begin{align}
    \bm\alpha^*_{\shortn i}(\vv_{i,(\ell)}) + \bm\alpha_i^{(k)}(\vv_{\shortn i}) \le 1-s_f;\label{equ:mip_feas_standard}
\end{align}

(ii) $k\in B_{i,(\ell)}^{\mathsf{incomp}}(\vv_{\shortn i})$, which means the $k$-th menu element is incompatible with the choice of bidder $-i$. 

Safety margin $s_f$ is used to define these two kinds of menu elements, which
affects the variables in the MILP.
We use the following MILP to adjust prices associated with bidder $i$'s menu elements: 
%
%
\begin{align}
\text{Decision} &  \text{ variables: price adjustments} \{\Delta\beta_i^{(k)}\}_{k=1}^{K-1}\text{;} \nonumber \\
\text{Binary v} & \text{ariables: } z^{(\ell k)}, \text{for } l\in[|\mathcal{V}_i|],k\in B_{i,(\ell)}^{\mathsf{comp}}(\vv_{\shortn i}); \text{Variables: } U_{(\ell)};\nonumber\\
    \min& \sum_{k\in[K-1]} |\Delta\beta_i^{(k)}|\tag{Objective}\label{eq:trans_obj}\\
s.t.\ \text{For } &\forall\ell, \forall k\in B_{i,(\ell)}^{\mathsf{comp}}(\vv_{\shortn i}):\tag{Constraint Set 1}\label{eq:trans_feas}\\
& U_{(\ell)} \ge \vv_{i,(\ell)}^\Tau\bm\alpha_i^{(k)} - \beta_i^{(k)} - \Delta\beta^{(k)}_i + (1-z^{(\ell k)})s_m, \label{eq:sm_for_comp_element}\\
& U_{(\ell)} \le \vv_{i,(\ell)}^\Tau\bm\alpha_i^{(k)} - \beta_i^{(k)} - \Delta\beta^{(k)}_i + (1-z^{(\ell k)})M,\nonumber\\
& z^{(\ell k)}\in\{0,1\}, \sum\nolimits_{k\in B_{i,(\ell)}^{\mathsf{comp}}(\vv_{\shortn i})} z^{(\ell k)} = 1,\nonumber \\
\text{For } & \forall\ell, \forall k'\in B_{i,(\ell)}^{\mathsf{incomp}}(\vv_{\shortn i}):\tag{Constraint Set 2}\label{eq:trans_infeas}\\
& U_{(\ell)} \ge \vv_{i,(\ell)}^\Tau\bm\alpha_i^{(k')} - \beta_i^{(k')} -\Delta\beta^{(k')}_i + s_m.\label{eq:sm_for_incomp_element}
\end{align}

As we will explain, $s_m>0$ and $M>0$ are set to be suitable constants and, in  full generality, the price changes may be positive or negative.
This MILP ensures that the best compatible choice for $i$ at each possible value $\vv_i$, and fixing some $\vv_{\shortn i}$,
 is more appealing than any incompatible choice.
 In (\ref{eq:trans_feas}), we use the ``big-M method"
 and introduce binary variables $z^{(\ell k)}\in \{0,1\}$ 
to identify the maximum utility $U_{(\ell)}$ achievable by compatible elements. $z^{(\ell k)}$ is 1 only when the $k$-th menu element is the best choice of bidder $i$ of type $\vv_{i, (\ell)}$. $M$ is set to be sufficiently large (see Appendix~\ref{appx:big_M} for the detail). 
In (\ref{eq:trans_infeas}), we enforce that $U_{(\ell)}$ is larger than the utility of any incompatible element by a second {\em safety margin}, $s_m=L_a(mv_{\max}\epsilon+m\epsilon^2/2)+\epsilon L_p+\epsilon$, where $\epsilon$ is the interval (in $\ell_\infty$-norm) between two grid points in $\mathcal{V}$, $L_a$ the Lipschitz constant of the bundle network, and $L_p$  the Lipschitz constant of the price network. 
%
Similarly, we introduce $s_m>0$ in Eq.~\ref{eq:sm_for_comp_element}, so that the utility gap between the best and the second best compatible choice is at least $s_m$. 
%
Together, Eq.~\ref{eq:sm_for_comp_element} and~\ref{eq:sm_for_incomp_element} ensures that the utility of the best menu choice is larger than the others by at least $s_m$. 
We will see in Sec.~\ref{sec:thoery} that this safety margin allows us to provably establish menu compatibility, though this transformation, on 
the entire continuous value domain. This also removes any tie-breaking issue
 in making bidder-optimizing choices.
 %


The objective   of the MILP is to minimize the sum absolute price change.
%
We include the IR ($K$-th) element when finding the maximum achievable utility of feasible elements in (\ref{eq:trans_feas}), as this is always a compatible choice regardless of the selection of bidder $\shortn i$,
and fix the price of this element to 0 during transformation. 
For the two-bidder case, it is sufficient to guarantee menu compatibility by only adjusting the menu prices of bidder 1.

For the case of two or more bidders, the  price adjustment process proceeds for each bidder in  increasing order of bidder index. 
For bidder $i$, we again consider different possible values $\vv_{\shortn i}$ on a grid,
and consider different
values $v_i\in \mathcal V_i$ in constructing an MILP for each $\vv_{\shortn i}$.
 This MILP is the same as in the two-bidder case after identifying compatible menu choices. Considering the already transformed menus for the preceding bidders in the transform order, the $k$-th element of bidder $i$ is compatible if:
\begin{align}
    \bm\alpha_i^{(k)}(\vb_{\shortn i}) + \min\left(1- s_f, \sum_{j=1}^{i-1}\tilde{\bm\alpha}^*_{j}(\vb_{\shortn j})+\sum_{j=i+1}^n\bm\alpha^*_{j}(\vb_{\shortn j}) \right) \le 1- s_f,\label{equ:multi_mip_feas_standard}
\end{align}
%
where $\tilde{\bm\alpha}^*_{j}(\vb_{\shortn j})$ is the optimal choice for bidder $j< i$ given its already transformed menu. We clip the aggregate allocation of other bidders to $1-s_f$ because even a 0 allocation is infeasible if this aggregate allocation of other bidders is larger than $1- s_f$.
In Theorem~\ref{thm:general_comp}, we prove that these MILP formulations,
when solved for all bidders and considering all grid points of other bidders, ensures menu compatibility on the entire, continuous value domain.
 \begin{table}[t]
    \caption{Reducing MILP complexity for the 2-bidder, 2-item, i.i.d. uniform values on $[0,1]$ setting. $100\times 100$ grid points are used for each bidder except when using the adaptive grid strategy. Strategies are incrementally added row by row.\label{tab:eff_strategy}}
    \centering
    \begin{tabular}{CRCRCRCRCR}
        \toprule
        \multicolumn{2}{l}{\multirow{2}{*}{Strategy}} &
        \multicolumn{2}{c}{\multirow{2}{*}{\# MILP}} &
        \multicolumn{6}{c}{Per MILP Complexity (Mean $\pm$ Var)}\\
        
        \cmidrule(lr){5-6}
        \cmidrule(lr){7-8}
        \cmidrule(lr){9-10}
        \multicolumn{2}{c}{} &
        \multicolumn{2}{l}{} &
        \multicolumn{2}{c}{\# Constraints} & 
        \multicolumn{2}{c}{\# Binary Variables} & 
        \multicolumn{2}{c}{Run Time (s)}\\
        \midrule
        \multicolumn{2}{l}{Nothing} & \multicolumn{2}{c}{10,000        } & \multicolumn{2}{c}{3,568,866$\pm$271,276} & \multicolumn{2}{c}{548,866$\pm$271,276} & \multicolumn{2}{c}{9771$\pm$11,985}  \\
        \multicolumn{2}{l}{+ 0-Violation} & \multicolumn{2}{c}{3527} & \multicolumn{2}{c}{3,568,866$\pm$271,276} & \multicolumn{2}{c}{548,866$\pm$271,276} & \multicolumn{2}{c}{9771$\pm$11,985}  \\
        \multicolumn{2}{l}{+ Keep-Choice} & \multicolumn{2}{c}{3527          } & \multicolumn{2}{c}{2,550,387$\pm$244,295} & \multicolumn{2}{c}{38,312$\pm$18,647  } & \multicolumn{2}{c}{629.3 $\pm$154.3}  \\
        \multicolumn{2}{l}{+ IR-Screening} & \multicolumn{2}{c}{3527          } & \multicolumn{2}{c}{1491$\pm$1011        } & \multicolumn{2}{c}{373$\pm$223        } & \multicolumn{2}{c}{2.93$\pm$1.52  }  \\
        \multicolumn{2}{l}{+ Adaptive Grid} & \multicolumn{2}{c}{3527          } & \multicolumn{2}{c}{750$\pm$456          } & \multicolumn{2}{c}{28$\pm$188.6      } & \multicolumn{2}{c}{1.09$\pm$0.44  }  \\
        \midrule
        \multicolumn{2}{l}{Overall Reduction} & \multicolumn{2}{c}{-64.63\%} & \multicolumn{2}{c}{-99.98\%} & \multicolumn{2}{c}{-99.995\%} & \multicolumn{2}{c}{-99.99\%}\\
        \bottomrule
    \end{tabular}
\end{table}
\subsection{Accelerating the MIP}\label{sec:method:rdc}

The challenge with solving this series of MILPs lies in the computational efficiency.
In its basic form, for each $\vv_{i,(\ell)}\in \mathcal{V}_i$, a binary variable is assigned to each feasible menu element, resulting in a total of $\sum_{\ell\in[|\mathcal{V}_i|]}|B_{i,(\ell)}^{\mathsf{comp}}(\vv_{\shortn i})|$ binary variables in the MILP to transform $i$'s menu for $\vv_{\shortn i}$.
 Besides, the number of MILPs grows exponentially with the number of bidders and items because we need a MILP for each grid point $\vv_{\shortn i}\in\mathcal{V}_{\shortn i}$. 

Fortunately, our deep learning process already provides a menu representation that is Lipschitz smooth and almost  menu compatible, enabling us to develop the following
strategies to greatly reduce the running time without sacrificing 
the  guarantee on menu compatibility on the full domain. We show in Table~\ref{tab:eff_strategy} that 
these strategies allow us to
harness our computational pipeline to efficiently find 
SP, highly revenue-optimizing auctions.


\textbf{(I) 0-Violation}. For those grid points $\vv_{\shortn i}\in\mathcal{V}_{\shortn i}$ where the menus are already compatible, we do not need to run the MILP. This strategy reduces the number of MILPs to solve.

\textbf{(II) Keep-Choice}. In each MILP, for most $\vv_{i,(\ell)}$, the learned networks 
provide a feasible joint allocation, and we can seek to retain the  choice of bidder $i$ in this case during
price transformation. These \emph{keep-choice} (\namekp) constraints can be described by linear expressions without introducing a binary variable:
\begin{equation}
    \vv_{i,(\ell)}^\Tau\bm\alpha_i^{(k)} - \beta_i^{(k)}-\Delta\beta_i^{(k)}\le \vv_{i,(\ell)}^\Tau\bm\alpha_i^* - \beta_i^*- \Delta\beta_i^*, \ \ \forall k\in[K],\tag{KC}
\end{equation}
where $(\bm\alpha_i^*, \beta_i^*)$ is the optimal choice for $\vv_{i,(\ell)}$ in the network-generated menus. 
The critical consideration here is the extent to which we can
 apply \namekp~constraints. Applying \namekp~indiscriminately to all compatible $\vv_{i,(\ell)}$ might render the MILP at $\vv_{\shortn i}$ infeasible. 
This is because addressing other values of $\vv_i$ with over-allocation  
may require  price adjustments that lead to changes in the agent's choice at $\vv_{i,(\ell)}$.
 We adopt a  heuristic approach that works well empirically:
 sort compatible $\vv_{i,(\ell)}$ by  ascending difference between the utilities of their best and second-best menu elements, and apply the \namekp~constraint to the top $c\%$ compatible $\vv_{i,(\ell)}$, for some choice of $c\geq 0$.
 This heuristic is grounded in the observation that our price adjustment process changes the prices only
 slightly and bidders tend to stick with their initial choices when the alternatives are significantly less favorable.  In Table~\ref{tab:eff_strategy}, we choose $c=95$,
 and find  this strategy reduces the number of binary variables in each MILP by $93.02\%$ on average.  If this $c\geq 0$ value makes an MILP infeasible, we iteratively reduce $c$ value by 5 until the MILP becomes feasible, noting that the MILP is always feasible when $c=0$, as all \namekp~constraints are removed.

\textbf{(III) IR-Screening}. For those $\vv_{i,(\ell)}$ to which we introduce binary variables and apply the big-M method, we do not need a binary variable for each compatible menu element. We find that most (typically $>99\%$) of the learned menu elements have a negative utility. By adding a constraint that all 
 price adjustments, $\Delta\beta_i^{(k)}$, are non-negative,
the utility of these menu elements remain negative and they are not selected over the 
 null (utility 0) choice that ensures IR.
 In this way, we can safely remove the corresponding binary variables and constraints for these elements.  In Table~\ref{tab:eff_strategy}, this strategy further removes $89.71\%$ of the binary variables and $99.94\%$ of the constraints on average.\footnote{Allowing negative price changes has a  
 small positive influence on  revenue, but we find empirically that this effect is very limited (e.g., from 5.0207 to 5.0271 in the 3-bidder, 2-item setting with the irregular valuation distribution ($\prescript{3}{2}{IRR}_\mathtt{Add}$)). One explanation is that our transform tends to address over-allocation, where increasing prices can help, e.g., by encouraging the null allocation option. Therefore, we recommend adopting non-negative price adjustments along with IR-screening for a good revenue-computational cost tradeoff.}
 
\textbf{(IV) Adaptive Grid}. Our theoretical analyses in Sec.~\ref{sec:thoery} provides 
 for an opportunity to adaptively decrease the size of the grid. Appendix~\ref{sec:exp_adaptive_grid} describes this {\em adaptive grid} strategy in detail.


\subsection{Discussion}

Our method makes use of MILPs to ensure the compatibility of learned menus on the entire value domain. An interesting question is 
whether this approach  may also open up 
a new, purely MILP-based methodology. To the best our knowledge, there is no previous work on computing revenue-maximizing menus from scratch with MILPs and
exploring new MILP-only methods, with generalization away from a grid and the new ideas
introduced here of safety margins, together with menu-based representations, seems possible and interesting
to explore in future work.
 For example, starting from our MILPs, one could additionally 
 introduce decision variables for menu bundles.
Although worth exploring, we identify three challenges stemming from the absence of deep learning initialization and hope to spark further investigation with this brief discussion. 

(1) The lack of initialization through deep learning would preclude sequentially changing the menus of individual bidders while keeping the menus of others fixed, because it is hard to distinguish compatible and incompatible menu elements as in Eq.~\ref{equ:mip_feas_standard} when constructing MILPs. Rather, a pure MILP approach would naively need to consider the menus of all bidders simultaneously, within a single MILP. This MILP would require an
 additional $GKmn$ decision variables to model bundles, where the grid size 
$G$ increases exponentially with the number of items $m$ and the number of bidders $n$, and $K$ is the menu size.  It would also necessitate an additional $O(GKn)$ binary variables and $O(Gm)$ bilinear (as they involve multiplication of bundle variables with binary variables) menu-compatibility constraints using the big-M method. 

(2) Furthermore, techniques in Table 1, which  cut price-adjustment time by >99.99\%, all depend on initialization through deep learning. For example, the 0-violation and keep-choice methods are applicable when menus are already compatible after deep learning, and IR-screening relies on certain menu elements having negative utilities after deep learning. Without deep initialization, MILP-only methods might suffer from prohibitively long running times. 

(3) Another challenge will be determining  how to set the MILP's objective. For example, one could seek to maximize a likelihood-weighted, accumulated price at grid points, 
a bilinear function (as it involves multiplication of price variables with binary variables), but this may result in poor ``off-grid” performance if the grid is sparse. Conversely, with deep learning, we can opt to minimize absolute price changes in price transformation, following learning to maximize expected revenue (while also considering menu incompatibility) during deep learning. This formulation
of the price transformation problem is well suited to revenue goals and empirically proves effective even when using a relatively coarse grid. 
\section{Exact Strategy-Proofness}\label{sec:thoery}



In the price adjustment stage, we use a grid of values to transform the menu. In this section, we establish that the menu compatibility  established on this grid 
 extends to the entire continuous value domain.
The proof leverages the Lipschitz smoothness of the neural networks, which  bounds the changes of the menu across two grid points. 

We begin our analysis with the two-bidder case. With the $\ell_\infty$ distance between two grid points being $\epsilon$, the proof establishes menu compatibility throughout the entire value domain by proving that menu compatibility at a grid point is maintained for all values within an $\ell_\infty$ distance of $\frac{\epsilon}{2}$ from the point.
The following theorem follows from the use of safety margins $s_m$ and $s_f$ in the transformation, which  are set to depend on $\epsilon$ and the Lipschitz constants of
the menu networks. 
%
 %
\begin{theorem}[2-Bidder Menu Compatibility]\label{thm:2_bidder_comp_updated} For the two-bidder case, if $\vv=(v_1,v_2)$ is a grid point used in the menu transformation, then the transformed menu is compatible for any $\vv'\in\mathcal{N}_{\frac{\epsilon}{2}}=\{\vv'\,\, |\,\, \|\vv'-\vv\|_\infty\le \epsilon/2\}$.
\end{theorem}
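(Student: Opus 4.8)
The plan is to show that the compatibility the MILP enforces \emph{exactly} at the grid point $\vv=(v_1,v_2)$ --- with the two safety margins $s_m$ and $s_f$ of slack built in --- survives the move to any $\vv'\in\mathcal{N}_{\frac{\epsilon}{2}}$, because Lipschitz-smoothness of the bundle and price networks keeps every relevant quantity from drifting by more than those margins. First I would unwind the deployed mechanism at such a $\vv'$: since $\vv$ is a closest grid point, $\|v_1'-v_1\|_\infty\le\epsilon/2$ and $\|v_2'-v_2\|_\infty\le\epsilon/2$, bidder $1$'s menu at $\vv'$ is the network output at $v_2'$ shifted by the \emph{same} price adjustment $\Delta\beta_1$ solved in the MILP at $v_2$, while bidder $2$'s (untransformed) menu is the network output at $v_1'$. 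Hence each menu element at $\vv'$ differs from its index-matched counterpart at $\vv$ only through $f$ and $q$ evaluated at inputs within $\ell_\infty$-distance $\epsilon/2$, together with an own-value shift of at most $\epsilon/2$ per item, and the common price adjustment cancels out of any utility comparison.

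Second, I would record the two Lipschitz estimates the argument hinges on. For each bidder $i$ and menu index $k$, the bundle drift satisfies $\|\bm\alpha_i^{(k)}(\vv_{-i}')-\bm\alpha_i^{(k)}(\vv_{-i})\|_\infty\le L_a\epsilon/2$, and the utility $u_i^{(k)}$ changes by an amount that decomposes into an own-value term ($\le\epsilon/2$ per item against the bundle), a bundle term ($\le L_a\epsilon/2$ against a per-item value bounded by $v_{\max}$), a second-order cross term of order $L_a m\epsilon^2$, and a price term ($\le L_p\epsilon/2$). The margins are calibrated exactly so that this utility drift is strictly below $s_m/2$ --- this is the provenance of $s_m=L_a(mv_{\max}\epsilon+m\epsilon^2/2)+\epsilon L_p+\epsilon$, the trailing ``$+\epsilon$'' supplying the strict inequality --- and so that the bundle drift, summed over the $n=2$ bidders, is at most $s_f=n\epsilon L_a/2$.

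Third, I would argue that bidder $1$'s utility-maximizing choice at $\vv'$ is the very element $k^{*}$ it chose at $\vv$: by the big-$M$ linkage through $U_{(\ell)}$ together with the $s_m$ terms in \ref{eq:sm_for_comp_element} and \ref{eq:sm_for_incomp_element}, at $\vv$ the element $k^{*}$ beats \emph{every} other element of bidder $1$'s menu --- compatible or incompatible --- by at least $s_m$ in utility, so since each side of such a comparison moves by less than $s_m/2$ at $\vv'$, $k^{*}$ remains the strict maximizer (which also disposes of tie-breaking). Since $k^{*}$ was classified \emph{compatible}, \ref{equ:mip_feas_standard} gives $\bm\alpha_1^{(k^{*})}(v_2)+\bm\alpha_2^{*}(v_1)\le 1-s_f$. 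To close, I would transfer this to $\vv'$: bidder $1$'s bundle there is $\bm\alpha_1^{(k^{*})}(v_2')$, within $L_a\epsilon/2$ of $\bm\alpha_1^{(k^{*})}(v_2)$ in $\ell_\infty$, and a companion analysis for bidder $2$ (the delicate point, below) bounds bidder $2$'s selected bundle at $\vv'$ to within $L_a\epsilon/2$ of $\bm\alpha_2^{*}(v_1)$; summing, the allocation at $\vv'$ exceeds $1-s_f$ by at most $n\cdot L_a\epsilon/2=s_f$ in each coordinate, hence is $\le 1$. So the transformed menu is compatible at $\vv'$, as claimed; ranging over all grid points (whose $\frac{\epsilon}{2}$-balls cover $V$), this gives compatibility on all of $V$, and together with self-bid independence and bidder-optimization, exact strategy-proofness via Theorem~\ref{thm:menu2}.

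The step I expect to be the main obstacle is controlling bidder $2$ in the last part: unlike bidder $1$, bidder $2$'s menu is not re-priced in the two-bidder case, so there is no MILP-enforced $s_m$-margin forcing bidder $2$'s discrete argmax to remain fixed between $\vv$ and $\vv'$. One must instead exploit Lipschitz-continuity of bidder $2$'s \emph{optimal} utility --- so that whatever element bidder $2$ selects at $\vv'$ is within $O(\epsilon)$ of optimal at the grid point --- and use the $s_f$-slack in \ref{equ:mip_feas_standard}, together with a careful treatment of near-ties, to conclude that the bundle actually selected at $\vv'$ is still compatible with $\bm\alpha_1^{(k^{*})}(v_2')$; verifying that the accumulated bundle drift of both bidders is covered by $s_f=n\epsilon L_a/2$ is then the heart of the matter. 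Everything else reduces to the routine Lipschitz bookkeeping that reproduces the stated forms of $s_m$ and $s_f$.
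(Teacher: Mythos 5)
Your main argument tracks the paper's proof closely: the same two-stage structure (Lipschitz bookkeeping bounds every menu element's utility drift over $\mathcal{N}_{\frac{\epsilon}{2}}$ by $s_m/2$; the MILP's $s_m$-gap from Eq.~\ref{eq:sm_for_comp_element} and Eq.~\ref{eq:sm_for_incomp_element} then pins bidder 1's argmax to the element chosen at the grid point; and the $s_f$-slack in Eq.~\ref{equ:mip_feas_standard} absorbs the residual bundle drift of at most $n\cdot L_a\epsilon/2$), with the same calibration of $s_m$ and $s_f$. (One small misattribution: the trailing $+\epsilon$ in $s_m$ is just the doubled own-value term $2\cdot\|\Delta\vv_i\|_\infty$, not an extra slack for strictness.)

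The divergence --- and the gap --- is in how you propose to handle bidder 2. You are right that this is the delicate step, but the repair you sketch (Lipschitz continuity of bidder 2's \emph{optimal utility} plus a ``careful treatment of near-ties'') cannot close it: near-optimality in utility says nothing about the bundle. If bidder 2 is within $O(\epsilon)$ of a tie at the grid point, then at $\vv'$ it may switch to a menu element whose bundle is arbitrarily far from $\bm\alpha^{*}_2(v_1)$, and the slack $s_f=n\epsilon L_a/2$ only covers the $L_a\epsilon/2$ drift of a \emph{fixed} element's bundle, not a change of element. The paper instead applies the part-(I) argument symmetrically to both bidders: every bidder's menu is subjected to the price transformation (the sequential procedure of Sec.~\ref{sec:method:t} processes each bidder in turn), so bidder 2's best element also beats its second-best by at least $s_m$ after transformation, its per-element utility drift is likewise at most $s_m/2$, and its argmax index is therefore also locked on $\mathcal{N}_{\frac{\epsilon}{2}}$. (The paper's remark that adjusting only bidder 1's prices suffices concerns compatibility \emph{at the grid points}; the off-grid extension needs the $s_m$-gap in every bidder's menu.) With that substitution your argument goes through; as written, the bidder-2 step fails.
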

\begin{proof}
(I) We first prove that neither bidder will  change their menu selection in \neighborjoint. Without loss of generality, we consider bidder $i$'s menu, which is generated conditioned on the bidder $\shortn i$'s value. The utility changes of the $k$-th element in bidder $i$'s menu in \neighborjoint~is affected by two factors: the changes in its allocation and price due to the varying network input $\vv_{\shortn i}$; and the changes in value $\vv_i$. Although the utility also depends on the price adjustments, we use the adjustments at $\vv$ for any  $\vv'\in\mathcal{N}_{\frac{\epsilon}{2}}$, which is a  constant adjustment and does not result in a utility change in \neighborjoint. Specifically, the changes in the allocation and price are bounded when the menu networks are Lipschitz smooth: $\|\tilde\Delta \bm\alpha_i^{(k)}\|_\infty \le L_a \|\Delta \vv_{\shortn i}\|_\infty \le \frac{\epsilon}{2}L_a$, and $\|\tilde\Delta \beta_i^{(k)}\|_\infty \le L_p \|\Delta \vv_{\shortn i}\|_\infty \le \frac{\epsilon}{2}L_p$, and the change $\|\Delta\vv_i\|_\infty$ in $\vv_i$ is bounded by $\epsilon/2$. Therefore, the change in the utility of any menu element $k$ is upper bounded by:
$\|\tilde\Delta u_i^{(k)}\| \le \|\Delta\vv_i^\Tau\bm\alpha_i^{(k)}\|+\|\vv_i^\Tau\tilde\Delta \bm\alpha_i^{(k)}\|+\|\Delta\vv_i^T\tilde\Delta \bm\alpha_i^{(k)}\| + \|\tilde\Delta \beta_i^{(k)}\| \le \|\Delta\vv_i\|_\infty+mv_{\max}\|\tilde\Delta \bm\alpha_i^{(k)}\|_\infty+m\|\tilde\Delta \bm\alpha_i^{(k)}\|_\infty\|\Delta\vv_i\|_\infty + \|\tilde\Delta \beta_i^{(k)}\| \le \frac{\epsilon}{2}+mv_{\max}\frac{\epsilon}{2}L_a+m\frac{\epsilon^2}{4}L_a+\frac{\epsilon}{2}L_p=s_m/2.$

Suppose that the price adjustment to the $k$-th element is $\Delta\beta_i^{(k)}$, and bidder $i$ selects $(\bm\alpha_i^*, \beta_i^*+\Delta\beta_i^*)$ at $\vv_i$. Then, at $\forall \vv'\in\mathcal{N}_{\frac{\epsilon}{2}}$, for any other menu element $k$, we have \begin{align}
    u_i^*(\vv_i') \shortn u_i^{(k)}(\vv_i')\ge  (u_i^*(\vv_i) \shortn \|\tilde\Delta u_i^*\|) \shortn (u_i^{(k)}(\vv_i) \shortp \|\tilde\Delta u_i^{(k)}\|)\ge  u_i^*(\vv_i) \shortn u_i^{(k)}(\vv_i) \shortn s_m.\nonumber
\end{align}

Including the safety margin $s_m$ in Eq.~\ref{eq:sm_for_comp_element} and~\ref{eq:sm_for_incomp_element} ensures that the utility of the best element in the menu is larger than the second best element by at least $s_m$:
$u_i^*(\vv_i) - u_i^{(k)}(\vv_i) \ge s_m.$ It follows that $u_i^{*}(\vv_i') - u_i^{(k)}(\vv_i') \ge 0$, which means the bidder will not change its menu selection.

(II) Since the bidders do not change their selection in \neighborjoint, the only possibility of incompatibility comes from the change in the menu element selected by bidders due to the varying network input. For each bidder $i$, this change is bounded by $\|\tilde\Delta \bm\alpha_{i}^*\|_\infty \le L_a \|\Delta \vv_{\shortn i}\|_\infty\le \frac{\epsilon}{2}L_a.$
    
Here our safety margin $s_f=n\cdot\epsilon\cdot L_a/2$ comes into play. With $s_f$ in the incompatibility loss (Eq.~\ref{equ:f_loss}) and the transformation process (Eq.~\ref{equ:mip_feas_standard}), we have 

\begin{align}
\bm\alpha_i^{(k_i^*)}(\vv') + \bm\alpha_{\shortn i}^{(k_{\shortn i}^*)}(\vv') &\leq  
\bm\alpha_i^{(k_i^*)}(\vv) + \bm\alpha_{\shortn i}^{(k_{\shortn i}^*)}(\vv) 
+ \|\tilde\Delta \bm\alpha_{i}^{(k_i^{*})}\|_\infty + \|\tilde\Delta \bm\alpha_{\shortn i}^{(k_{\shortn i}^{*})}\|_\infty \nonumber \\
&\leq 1 - \epsilon L_a + 2 \cdot \frac{\epsilon}{2}L_a = 1.\nonumber    
\end{align}
Here, $k_i^*$ and $k_{\shortn i}^*$ is the choice of bidder $i$ and $\shortn i$ at $\vv_i$, respectively. This means that the joint allocation is always feasible in \neighborjoint.
\end{proof}

We now extend the discussion to the cases of more than two bidders. 
%
\begin{theorem}[General $n$-Bidder Menu Compatibility]\label{thm:general_comp}
For the $n$-bidder case, if $\vv = (\vv_1, \cdots, \vv_{\shortn n})$ is a grid point used in the
menu transformation, then the transformed menu is compatible for any
$\vv'\in\mathcal{N}_{\frac{\epsilon}{2}}=\{\vv'\,\, |\,\, \|\vv'-\vv\|_\infty\le \epsilon / 2\}$.
\end{theorem}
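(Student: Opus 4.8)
The plan is to keep the two-step structure of the proof of Theorem~\ref{thm:2_bidder_comp_updated} and lift each step to $n$ bidders; the only genuinely new content is controlling how the sequential, bidder-by-bidder price transformation interacts with the feasibility constraint. \textbf{Step 1: no bidder changes its selected menu index on $\mathcal{N}_{\epsilon/2}$.} Part (I) of the two-bidder proof carries over essentially unchanged: for each bidder $i$ and each menu element $k$, the bound $\|\tilde\Delta u_i^{(k)}\|\le s_m/2$ between $\vv$ and any $\vv'\in\mathcal{N}_{\epsilon/2}$ used only $\|\Delta\vv_i\|_\infty\le\epsilon/2$, $\|\Delta\vv_{\shortn i}\|_\infty\le\epsilon/2$, the Lipschitz constants $L_a,L_p$ of the bundle and price networks, and $v_{\max}$, none of which depends on $n$. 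For bidder $i$, the transformation MILP --- which now classifies menu elements as compatible or incompatible through the multi-bidder test Eq.~\ref{equ:multi_mip_feas_standard} --- certifies, via the margin $s_m$ in Eqs.~\ref{eq:sm_for_comp_element} and~\ref{eq:sm_for_incomp_element}, one compatible element whose post-transform utility at the grid value $\vv_i$ exceeds that of every other element of $i$'s menu by at least $s_m$; this element is therefore $i$'s bidder-optimizing choice $k^*_i$, and since each utility moves by at most $s_m/2$ on each side the same index stays optimal throughout $\mathcal{N}_{\epsilon/2}$ (which also settles ties).

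\textbf{Step 2: the selected bundles have $s_f$ of coordinatewise slack at the grid point.} I would prove that after all $n$ transforms are performed in index order, $\sum_{j=1}^n\tilde{\bm\alpha}^*_j(\vv_{\shortn j})\le 1-s_f$ coordinatewise at the grid point $\vv$, where $\tilde{\bm\alpha}^*_j$ denotes the post-transform bidder-optimizing bundle. This is an induction on $i$ of the statement $\sum_{j=1}^i\tilde{\bm\alpha}^*_j(\vv_{\shortn j})\le 1-s_f$. The base case $i=1$ is immediate from Eq.~\ref{equ:multi_mip_feas_standard} (no preceding bidders) together with nonnegativity of all bundles. For the inductive step, apply Eq.~\ref{equ:multi_mip_feas_standard} for bidder $i$ at the grid value $\vv_i$ and split coordinatewise: in a coordinate where the unclipped aggregate $\sum_{j<i}\tilde{\bm\alpha}^*_j+\sum_{j>i}\bm\alpha^*_j$ is $\le 1-s_f$, the constraint gives $\tilde{\bm\alpha}^*_i+\sum_{j<i}\tilde{\bm\alpha}^*_j\le 1-s_f$ after discarding the nonnegative $j>i$ terms; in a coordinate where it exceeds $1-s_f$, the clip to $1-s_f$ forces $\tilde{\bm\alpha}^*_i=0$ there, so the induction hypothesis $\sum_{j<i}\tilde{\bm\alpha}^*_j\le 1-s_f$ closes the step. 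Taking $i=n$ gives the claim. (A routine side condition, that $\epsilon$ is small enough that $1-s_f\ge 0$, guarantees the clipped quantity and all partial sums are nonnegative.)

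\textbf{Step 3: transfer off the grid.} By Step 1 every bidder $j$ keeps its index $k^*_j$ on $\mathcal{N}_{\epsilon/2}$; since the bundle network is $L_a$-Lipschitz, $\|\tilde\Delta\bm\alpha_j^{(k^*_j)}\|_\infty\le L_a\|\Delta\vv_{\shortn j}\|_\infty\le\epsilon L_a/2$. Summing over the $n$ bidders and invoking Step 2, $\sum_{j=1}^n\bm\alpha_j^{(k^*_j)}(\vv'_{\shortn j})\le(1-s_f)+n\cdot\epsilon L_a/2=1$ for every $\vv'\in\mathcal{N}_{\epsilon/2}$, because $s_f=n\epsilon L_a/2$; this is precisely menu compatibility on all of $\mathcal{N}_{\epsilon/2}$. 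The same estimate localizes: replacing the global $L_a$ by the Lipschitz constant over a neighborhood permits a coarser grid there, which is what underlies the adaptive-grid refinement.

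\textbf{Where the difficulty lies.} Steps 1 and 3 are direct analogues of the two-bidder proof; the delicate point is Step 2. When bidder $i$'s MILP is constructed, bidders $j>i$ have not yet been transformed, so the feasibility test Eq.~\ref{equ:multi_mip_feas_standard} is written against their \emph{pre-transform} choices, whereas the conclusion must hold for the \emph{final} choices after all transforms. The induction closes only because of the $\min(1-s_f,\cdot)$ clipping in Eq.~\ref{equ:multi_mip_feas_standard}: in any coordinate already saturated by the other bidders it forces bidder $i$'s selected bundle to be $0$ there, so the bound on $\sum_{j\le i}\tilde{\bm\alpha}^*_j$ never needs control over the not-yet-processed bidders $j>i$, and the estimate can be pushed monotonically along the transform order.
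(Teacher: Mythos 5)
Your proposal is correct and follows essentially the same route as the paper: the same reduction to (a) showing no bidder changes its selected menu index on $\mathcal{N}_{\frac{\epsilon}{2}}$ via the $s_m/2$ utility bound, (b) establishing $\sum_j \tilde{\bm\alpha}^*_j \le 1 - s_f$ at the grid point from the clipped feasibility constraint in Eq.~\ref{equ:multi_mip_feas_standard}, and (c) transferring off the grid by summing the $n$ per-bidder bundle perturbations of $\epsilon L_a/2$ against $s_f = n\epsilon L_a/2$. The only cosmetic difference is that your Step 2 runs a forward induction over the transform order where the paper argues by contradiction via a first item-over-allocating index $q$; both rest on exactly the same case split on the $\min(1-s_f,\cdot)$ term, and your version is, if anything, slightly cleaner.
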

\begin{proof}
(I) We first prove by contradiction that the joint allocation of bidders at any grid point after transformation
is at most $1- s_f$. Assume that, at a grid point, we have $\sum_{i=1}^{n}\tilde{\alpha}^*_{ij}(\vb_{\shortn i})>1- s_f$ for some item $j$, where $\tilde{\bm\alpha}^*_{i}(\vb_{\shortn i})$ is the allocation for bidder $i$ given the transformed menu. Then there must be a bidder $1<q\le n$ satisfying that $\sum_{i=1}^{q-1}\tilde{\alpha}^*_{ij}(\vb_{\shortn i})\le 1- s_f$ and $\sum_{i=1}^{q}\tilde{\alpha}^*_{ij}(\vb_{\shortn i})> 1- s_f$. 
This is because the allocation to bidder 1 must follow $\tilde{\bm\alpha}^*_{i}(\vb_{\shortn i})\le 1- s_f$. Otherwise, the left-hand side of Eq.~\ref{equ:multi_mip_feas_standard} for bidder 1 is larger than the right-hand side. 

We take a closer look at bidder $q$. Suppose it selects $\tilde{\alpha}_{qj}^*(\vb_{\shortn i})$ after price adjustment. Due to the formulation of our MILPs (Eq.~\ref{equ:multi_mip_feas_standard}), we have
\begin{align}
\tilde{\alpha}_{qj}^*(\vb_{\shortn i}) + \min\left(1 - s_f, \sum_{i=1}^{q-1}\tilde{\alpha}^*_{ij}(\vb_{\shortn i}) 
+ \sum_{i=q+1}^{n}\alpha^*_{ij}(\vb_{\shortn i}) \right) \le 1 - s_f\label{equ:n_proof_comp_q}
\end{align}
We analyze two possible scenarios based on the value of the sum inside the $\min$ function. (1) If $\sum_{i=1}^{q-1}\tilde{\alpha}^*_{ij}(\vb_{\shortn i}) 
+ \sum_{i=q+1}^{n}\alpha^*_{ij}(\vb_{\shortn i}) \ge 1 - s_f$, it follows from Eq.~\ref{equ:n_proof_comp_q} that $\tilde{\alpha}_{qj}^*(\vb_{\shortn i})+1 - s_f\le 1 - s_f$, indicating that $\tilde{\alpha}_{qj}^*(\vb_{\shortn i})=0$. This contradicts the assumption $\sum_{i=1}^{q-1}\tilde{\alpha}^*_{ij}(\vb_{\shortn i})\le 1- s_f$ and $\sum_{i=1}^{q}\tilde{\alpha}^*_{ij}(\vb_{\shortn i})> 1- s_f$. (2) If $\sum_{i=1}^{q-1}\tilde{\alpha}^*_{ij}(\vb_{\shortn i}) 
+ \sum_{i=q+1}^{n}\alpha^*_{ij}(\vb_{\shortn i}) < 1 - s_f$, we have $\sum_{i=1}^{q}\tilde{\alpha}^*_{ij}(\vb_{\shortn i}) 
+ \sum_{i=q+1}^{n}\alpha^*_{ij}(\vb_{\shortn i})<1 -  s_f$, which is also impossible because $\sum_{i=1}^{q}\tilde{\alpha}^*_{ij}(\vb_{\shortn i}) 
+ \sum_{i=q+1}^{n}\alpha^*_{ij}(\vb_{\shortn i})\ge \sum_{i=1}^{q}\tilde{\alpha}^*_{ij}(\vb_{\shortn i}) > 1 - s_f$.



(II) We then follow the case analysis in Theorem~\ref{thm:2_bidder_comp_updated} to prove the menu
 compatibility for any $\vv'\in$\neighborjoint. For case (I), for any menu element of any bidder $i$, the utility change in \neighborjoint~is still upper bounded by $s_m/2$.
 Therefore, our safety margin in Eq.~\ref{eq:sm_for_comp_element} and~\ref{eq:sm_for_incomp_element} can still guarantee that no bidder changes their selections in \neighborjoint. For case (II), the bundle selected by each bidder changes by at most $\epsilon \cdot L_a/2$. As a result, the sum allocation changes by at most $n/2\cdot \epsilon \cdot L_a=s_f/$,  and remains smaller than 1.
\end{proof}

With the guarantee of menu  compatibility, we get all the components required to prove that our 
computational pipeline generates exactly SP auctions.
\begin{theorem}[Exact Strategy-Proofness]\label{thm:sp}
The \name~framework ensures SP auction mechanisms.
\end{theorem}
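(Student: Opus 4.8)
The plan is to deduce the claim from the sufficient condition for strategy-proofness, Theorem~\ref{thm:menu2}: I would check that the mechanism \name\ deploys after price transformation is (i) self-bid independent, (ii) bidder-optimizing, and (iii) menu compatible on the entire continuous value domain, and then invoke Theorem~\ref{thm:menu2}; individual rationality then follows for free from the fixed null menu element.

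First I would make the deployed mechanism explicit. Given a reported value profile $\vv'$ and a bidder $i$, the mechanism (a) feeds $\vv'_{-i}$ to the trained bundle and price networks to produce the pre-transformed menu for $i$; (b) adds the price adjustments $\{\Delta\beta_i^{(k)}\}$ returned by the MILP of Section~\ref{sec:method:t} for bidder $i$, read off at the grid point closest to $\vv'$ in $\ell_\infty$-norm; (c) appends the fixed element $(\bm 0, 0)$, whose price is never changed; and (d) allocates to $i$ a utility-maximizing element of the resulting menu. Step (d) is precisely bidder-optimization, and there is no tie-breaking subtlety: the safety margin $s_m$ enforced by Eq.~\ref{eq:sm_for_comp_element} and Eq.~\ref{eq:sm_for_incomp_element} makes the chosen element strictly better than every other compatible or incompatible element.

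Next I would verify self-bid independence, which I expect to be the delicate step. The network inputs in (a) are $\vv'_{-i}$ only. For (b), the point is that the MILP producing $\{\Delta\beta_i^{(k)}\}$ is indexed by $\vv_{-i}\in\mathcal{V}_{-i}$ --- it is literally the same program whatever the $\mathcal{V}_i$-coordinate of the grid point --- and, since $\mathcal{V}=\mathcal{V}_i\times\mathcal{V}_{-i}$ is a product grid, the $\vv_{-i}$-coordinate of the grid point closest to $\vv'$ is exactly the point of $\mathcal{V}_{-i}$ closest to $\vv'_{-i}$; hence the adjustment applied to bidder $i$'s menu depends only on $\vv'_{-i}$. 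Therefore the deployed menu function $B_i$ is a function of $\vv'_{-i}$ alone, which is also exactly what precludes any benefit from misreporting in order to ``trigger or avoid'' the transformation.

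Finally, menu compatibility on the whole domain is exactly Theorems~\ref{thm:2_bidder_comp_updated} and~\ref{thm:general_comp}: since the value space is bounded and the grid has spacing $\epsilon$, every $\vv'$ lies within $\ell_\infty$-distance $\epsilon/2$ of some grid point, and those theorems then give $\sum_i\bm\alpha_i^*(\vv')\le\bm 1$. This step implicitly needs each MILP to be feasible so that the transformed menus are well-defined; this holds because raising all prices until the null element becomes optimal always satisfies the constraints (and the optional Keep-Choice acceleration constraints are dropped whenever they would make an MILP infeasible). Because $(\bm 0,0)$ is always available and bidder $i$ chooses optimally, $u_i\ge 0$, so IR holds as well. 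With (i)--(iii) established, Theorem~\ref{thm:menu2} yields that the deployed auction is SP (and IR), which is the claim. The only genuinely nontrivial obstacle is the bookkeeping behind self-bid independence --- in particular checking that the sequential transform order (transformed menus of bidders $j<i$, raw menus of bidders $j>i$) does not covertly reintroduce a dependence of $B_i$ on $b_i$; this follows once one observes that every MILP in the pipeline conditions only on the values of bidders other than the one whose menu it adjusts.
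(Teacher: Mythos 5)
Your proposal is correct and follows essentially the same route as the paper: it verifies the three hypotheses of Theorem~\ref{thm:menu2} (self-bid independence, bidder-optimization, and menu compatibility via Theorems~\ref{thm:2_bidder_comp_updated} and~\ref{thm:general_comp}) and then invokes that sufficient condition, just as the paper's proof does. Your added bookkeeping --- that the MILP for bidder $i$ is indexed only by $\vv_{-i}\in\mathcal{V}_{-i}$ and that rounding to the nearest grid point in the product grid acts coordinate-wise, so the applied adjustment depends only on $\vv'_{-i}$ --- is a slightly more explicit justification of the self-bid-independence step the paper states more briefly, but it is the same argument.
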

\begin{proof}
By Theorem \ref{thm:menu2}, a menu-based mechanism is SP 
if (1) menus are self-bid independent, and (2)  choices are bidder optimizing, 
and (3) we have menu compatibility.

(I) The menu networks satisfy  conditions (1) and (2). Bidder $i$'s menu is generated by a network conditioned on the bids of others and is self-bid independent. The mechanism then selects for each bidder  the menu element with the highest utility according to its report. 

(II) The price adjustment process does not violate these two conditions. (1) This step uses a grid 
that is independent of  specific bid  values and moreover, specific bids
 do not exert any influence on the construction of the MILPs. 
Therefore, the price changes as the outcome of this MILP are independent from specific bids.
 (2) This step does not alter the logic of the 
menu-based representation. After price adjustments, each bidder $i$ still faces a menu that's generated (and transformed) independently of its report (misreporting to any $\vv_i'\neq \vv_i$ does not change the menu bidder $i$ receives), 
and bidder $i$ still gets the menu element with the highest utility based on its report (considering adjusted prices).
No bidder can intentionally trigger or avoid the price adjustment process by misreporting.

(III) Menu compatibility is ensured over the entire value domain (Theorem~\ref{thm:general_comp}).
\end{proof}

\begin{table} [t]
    \caption{$\prescript{n}{m}{U}_\mathtt{Add/Unit}$ represents $n$ bidders, $m$ items, and additive or unit-demand valuations, with values  uniformly distributed on $[0,1]$.
    The revenue of \name~consistently exceeds all SP baselines and is close to that of RegretNet, which is not exactly SP. The learned auctions are visualized in Figs.~\ref{fig:mech_compare_2x2},~\ref{fig:u33_compare}, and~\ref{fig:u22_slicing},
    and show an improved allocation structure in \name. See Appendix~\ref{appx:exp_baseline} for the detailed setup of baselines.\label{tab:u01_baselines}}
    \centering
    \begin{tabular}{CRCRCRCRCRCRCR}
        \toprule
        \multicolumn{2}{c}{\multirow{2}{*}{}} &
        \multicolumn{2}{l}{\multirow{2}{*}{Alg.}} &
        \multicolumn{10}{c}{Setting}\\
        
        \cmidrule(lr){5-6}
        \cmidrule(lr){7-8}
        \cmidrule(lr){9-10}
        \cmidrule(lr){11-12}
        \cmidrule(lr){13-14}
        \multicolumn{2}{c}{} &
        \multicolumn{2}{l}{} &
        \multicolumn{2}{c}{$\prescript{2}{2}{U}_\mathtt{Add}$} & 
        \multicolumn{2}{c}{$\prescript{2}{5}{U}_\mathtt{Add}$} & 
        \multicolumn{2}{c}{$\prescript{3}{3}{U}_\mathtt{Add}$}& 
        \multicolumn{2}{c}{$\prescript{3}{5}{U}_\mathtt{Add}$}& 
        \multicolumn{2}{c}{$\prescript{2}{10}{U}_\mathtt{Unit}$} \\
        \midrule
        \multicolumn{2}{c}{Ours} & \multicolumn{2}{l}{\name} & \multicolumn{2}{l}{\textbf{0.878}} & \multicolumn{2}{l}{\textbf{2.31}} & \multicolumn{2}{l}{\textbf{1.6748}} & \multicolumn{2}{l}{\textbf{3.1237}} & \multicolumn{2}{l}{\textbf{1.4294}}\\
        \midrule
        \multicolumn{2}{c}{\multirow{3}{*}{SP baselines}} & \multicolumn{2}{l}{VCG} & \multicolumn{2}{l}{0.667} & \multicolumn{2}{l}{1.667} & \multicolumn{2}{l}{1.4990} & \multicolumn{2}{l}{2.5000} & \multicolumn{2}{l}{$--$}\\
        \multicolumn{2}{c}{} & \multicolumn{2}{l}{Item-Myerson} & \multicolumn{2}{l}{0.833} & \multicolumn{2}{l}{2.083} & \multicolumn{2}{l}{1.5919} & \multicolumn{2}{l}{2.6574} & \multicolumn{2}{l}{$--$}\\
        \multicolumn{2}{c}{} & \multicolumn{2}{l}{AMenuNet} & \multicolumn{2}{l}{0.8628} & \multicolumn{2}{l}{2.2768} & \multicolumn{2}{l}{1.6322} & \multicolumn{2}{l}{2.8005} & \multicolumn{2}{l}{1.2908}\\
        \cmidrule(lr){1-2}
        \cmidrule(lr){3-4}
        \cmidrule(lr){5-6}
        \cmidrule(lr){7-8}
        \cmidrule(lr){9-10}
        \cmidrule(lr){11-12}
        \cmidrule(lr){13-14}
        \multicolumn{2}{c}{\multirow{2}{*}{\makecell{Baselines with\\ IC violation}}} & \multicolumn{2}{l}{RegretNet\tnote{1}} & \multicolumn{2}{l}{0.908} & \multicolumn{2}{l}{2.437} & \multicolumn{2}{l}{1.68057} & \multicolumn{2}{l}{2.65086} & \multicolumn{2}{l}{1.4323}\\
        \multicolumn{2}{c}{} & \multicolumn{2}{l}{IC Violation} & \multicolumn{2}{l}{0.00054} & \multicolumn{2}{l}{0.00146} & \multicolumn{2}{l}{0.00182} & \multicolumn{2}{l}{0.01095} & \multicolumn{2}{l}{0.00487}\\
        \toprule
    \end{tabular}
    \begin{tablenotes}
        \item[1] In all tables, we report the better performance of RegretNet and RegretFormer. Their IC violation is approximated by using gradient ascent to find a good strategic bid that leads to higher utility. The real regret might be higher than shown.
    \end{tablenotes}
\end{table}

\section{Experimental Results}

We conduct a comprehensive set of experiments to evaluate \name~against various baselines, and to understand how \name~works by visualizing and analyzing the learned mechanisms. Specifically, we organize our experiments by answering the following questions: (1) Can \name~outperform existing deep auction methods? How closely do existing solutions such
as those coming from AMAs and RegretNet approach the optimal designs? (Sec.~\ref{sec:exp_emp}); (2) Can \name~recover the optimal auctions for settings where an analytical solution is known? (Sec.~\ref{sec:exp_yao}); (3) How are the mechanisms learned by \name~different from those learned by other deep methods? (Sec.~\ref{sec:exp_didactic} and~\ref{sec:exp_viz});
and (4) How can we construct adaptive grids? (Appendix.~\ref{sec:exp_adaptive_grid}.) 

\begin{figure}
    \vspace{-1.5em}
    \centering
    \includegraphics[width=0.95\linewidth]{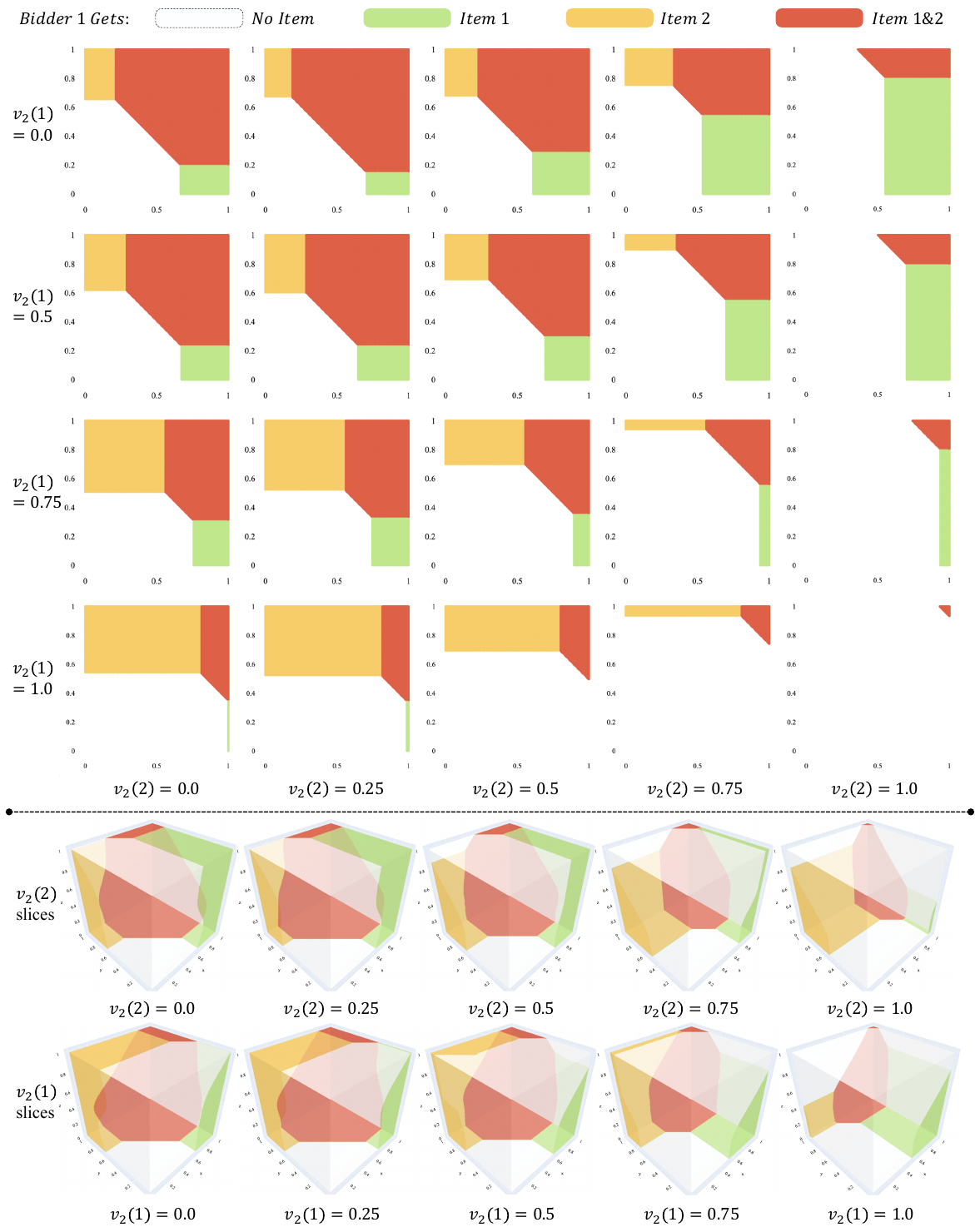}
    \vspace{-1em}
    \caption{Slicing the \name~mechanism in the auction setting with 2 additive bidders, 2 items, and uniform values on $[0,1]$. \textbf{2D Plots}: We vary Bidder 1's values, $v_1(1)$ and $v_1(2)$ in each subplot
    on the x- and y-axis respectively, varying Bidder 2's values, $v_2(1)$ and $v_2(2)$, across each subplot. \textbf{3D Plots}: We first  vary $v_1(1)$, $v_1(2)$, $v_2(1)$ in each subplot on x-, y-, z-axis, respectively, varying  $v_2(2)$ across each subplot.
     We then vary $v_1(1)$, $v_1(2)$, $v_2(2)$ in each subplot, on x-, y-, z-axis, respectively, 
     varying $v_2(1)$ across each subplot.
    \label{fig:u22_slicing}}
    \vspace{-1.5em}
\end{figure}

\subsection{Setup}\label{sec:exp_setup}
\textbf{Baselines}.\ \  We compare our results against the following baselines. (1) The \textbf{VCG} mechanism~\citep{vickrey1961counterspeculation,clarke1971multipart,groves}, which is SP. (2) \textbf{Item-Myerson}. This is SP for additive valuations, and independently runs an optimal Myerson auction for each individual item. (3) \textbf{Affine Maximizers}. We compare with {\em AMenuNet}~\citep{duan2023scalable}, which  is the state of the art and uses the transformer architecture to learn an affine transformation. Given that AMenuNet has been compared to other affine maximization methods, such as {\em AMA Lottery}~\citep{curry2022learning}, we exclude these works from our direct comparison. (4) \textbf{RegretNet}~\citep{duetting2023optimal} and \textbf{RegretFormer}~\citep{ivanov2022optimal}. This serves as a prominent example of methods are that expressive but not exactly SP.

\textbf{Network architecture and training setup}. For training networks, we use a minibatch of size $2^{13}$ and a menu size $K=300$, unless the number of items is greater than 5, in which case we increase the menu size to 1000 to strengthen the representation capacity. During training, we gradually increase the SoftMax temperature $\lambda_{\textsc{SoftMax}}$ in Eq.~\ref{equ:softmax_in_loss}, as smaller values of $\lambda_{\textsc{SoftMax}}$ help with initial exploration of the network weights and larger values approximate the argmax operation better. We also gradually increase the scaling factor $\lambda_{\textsc{Incomp}}$.   We evaluate the performance of the network once every $200$ epochs using a test set of $200K$ samples. Of all the checkpoints satisfying  an upper bound of compatibility violation ($0.1\%$ or $0.5\%$,  depending on task), we pick the one with the best revenue for   menu transformation.  Appendix~\ref{appx:exp_gem} 
gives full details regarding training and price adjustment.

\subsection{Representation Capacity}
\label{sec:exp_didactic}

We first demonstrate the representation capacity of \name\ by considering a setting 
 with two additive bidders, two items,
and i.i.d.~uniform values on $[0,1]$ ($\prescript{2}{2}{U}_\mathtt{Add}$).
This also serves to showcase \name\ as a method that  provides a 
first look at the structure 
of what we conjecture the optimal auction design for this problem.
%
%
In Fig.~\ref{fig:mech_compare_2x2}, we fix Bidder 2's value for the items to $(0, 0.6)$, 
and show the menu element selected for Bidder 1 with different values. 
We compare the mechanism learned by \name~(pre- and post-transformation), RegretNet, and AMenuNet. 
For fairness, AMenuNet's menu size was increased to 2048 from the default 512, enhancing its capacity. 
The comparison with AMenuNet highlights the loss in expressive  capacity of 
 affine-maximizing mechanisms. In particular, AMenuNet is unable to learn the top-left region where bidder 1 gets item 2 but not item 1 and instead has a larger set of types for which bidder 1 receives no items and makes zero payment.
 This serves to illustrate that \name\  is capable of achieving higher revenue than AMenuNet.
The comparison to  RegretNet  is also  interesting.  Compared with AMenuNet, RegretNet does an arguably better
job of identifying the high-level structure of the optimal allocation and payment rule. However, it is 
``fuzzy" and rendered with a lack of crispness as to the boundaries of regions. In contrast, \name\
extends the advantage of menu-based methods to 
this multi-bidder setting, with the bidder-optimizing aspect
providing a clear decision boundary between regions. 
 This showcases  the improved interpretability  of \name\ over 
 RegretNet.
 Also, Table~\ref{tab:u01_baselines} shows the improved revenue of \name\ in this setting over the
SP baselines.
\begin{table} [t]
    \caption{$n$ additive bidders, two items, and valuations on  support size two~\citep{yao2017dominant}. The optimality of \name~is consistent across different settings: first for $a=3$, $n=2$, $p=0.3$ and  varying $b\in\{7, 5, 4, 7/2\}$; and second with  $a=3$, $b=4$, $p=0.3$ and varying $n\in\{3,5,8\}$.
It is interesting that the transformation can {\em increase} the revenue and
 realize the exact optimal revenue. See Fig.~\ref{fig:mech_compare_yao_2x2} for a 
 visualization of the learned auction.\label{tab:yao}}
    \centering
    \begin{tabular}{CRCRCRCRCRCR|CRCRCRCR}
        \toprule
        \multicolumn{2}{c}{\multirow{2}{*}{}} &
        \multicolumn{2}{l}{\multirow{2}{*}{Alg.}} &
        \multicolumn{14}{c}{Setting ($a$ fixed to 3; $p$ fixed to 0.3.)}\\
        
        \cmidrule(lr){5-6}
        \cmidrule(lr){7-8}
        \cmidrule(lr){9-10}
        \cmidrule(lr){11-12}
        \cmidrule(lr){13-14}
        \cmidrule(lr){15-16}
        \cmidrule(lr){17-18}
        \multicolumn{2}{c}{} &
        \multicolumn{2}{l}{} &
        \multicolumn{2}{c}{$b$=7} & 
        \multicolumn{2}{c}{$b$=5} & 
        \multicolumn{2}{c}{$b$=4} & 
        \multicolumn{2}{c}{$b$=$\frac{7}{2}$} & 
        \multicolumn{2}{c}{$n$=3} & 
        \multicolumn{2}{c}{$n$=5} & 
        \multicolumn{2}{c}{$n$=8}
        \\
        \midrule
        
        \multicolumn{2}{c}{\multirow{2}{*}{Ours}} & \multicolumn{2}{l}{Pre-Transf.} & \multicolumn{2}{l}{12.7077} & \multicolumn{2}{l}{9.1435} & \multicolumn{2}{l}{7.4735} & \multicolumn{2}{l}{6.7186} & \multicolumn{2}{l}{7.8220} & \multicolumn{2}{l}{7.8220} & \multicolumn{2}{l}{7.9954}\\
        \cmidrule(lr){3-4}
        \cmidrule(lr){5-6}
        \cmidrule(lr){7-8}
        \cmidrule(lr){9-10}
        \cmidrule(lr){11-12}
        \cmidrule(lr){13-14}
        \cmidrule(lr){15-16}
        \cmidrule(lr){17-18}
        \multicolumn{2}{c}{} & \multicolumn{2}{l}{Post-Transf.} & \multicolumn{2}{l}{\textbf{12.7400}} & \multicolumn{2}{l}{\textbf{9.1504}} & \multicolumn{2}{l}{\textbf{7.4774}} & \multicolumn{2}{l}{\textbf{6.72205}} & \multicolumn{2}{l}{\textbf{7.8309}} & \multicolumn{2}{l}{\textbf{7.9830}} & \multicolumn{2}{l}{\textbf{7.9996}}\\
        \midrule
        \multicolumn{2}{c}{\multirow{5}{*}{Baselines}} & \multicolumn{2}{l}{VCG} & \multicolumn{2}{l}{9.9221} & \multicolumn{2}{l}{7.9668} & \multicolumn{2}{l}{6.9815} & \multicolumn{2}{l}{6.4908} & \multicolumn{2}{l}{7.5673} & \multicolumn{2}{l}{7.9363} & \multicolumn{2}{l}{7.9977}\\
        \multicolumn{2}{c}{} & \multicolumn{2}{l}{Item-Myerson} & \multicolumn{2}{l}{10.9220} & \multicolumn{2}{l}{7.2829} & \multicolumn{2}{l}{5.4607} & \multicolumn{2}{l}{4.5490} & \multicolumn{2}{l}{5.8364} & \multicolumn{2}{l}{5.9848} & \multicolumn{2}{l}{5.9996}\\
        \multicolumn{2}{c}{} & \multicolumn{2}{l}{AMenuNet} & \multicolumn{2}{l}{10.2852} & \multicolumn{2}{l}{8.1416} & \multicolumn{2}{l}{7.0697} & \multicolumn{2}{l}{6.5349} & \multicolumn{2}{l}{7.5323} & \multicolumn{2}{l}{7.7363} & \multicolumn{2}{l}{7.9956}\\
        \cmidrule(lr){3-4}
        \cmidrule(lr){5-6}
        \cmidrule(lr){7-8}
        \cmidrule(lr){9-10}
        \cmidrule(lr){11-12}
        \cmidrule(lr){13-14}
        \cmidrule(lr){15-16}
        \cmidrule(lr){17-18}
        \multicolumn{2}{c}{\multirow{2}{*}{}} & \multicolumn{2}{l}{RegretNet} & \multicolumn{2}{l}{12.8052} & \multicolumn{2}{l}{9.17377} & \multicolumn{2}{l}{7.50169} & \multicolumn{2}{l}{6.6799} & \multicolumn{2}{l}{7.8370} & \multicolumn{2}{l}{7.998} & \multicolumn{2}{l}{8.0000} \\
        \multicolumn{2}{c}{} & \multicolumn{2}{l}{(IC violation)} & \multicolumn{2}{l}{0.00951} & \multicolumn{2}{l}{0.00846} & \multicolumn{2}{l}{0.00548} & \multicolumn{2}{l}{0.0094} & \multicolumn{2}{l}{0.0383} & \multicolumn{2}{l}{0.0396} & \multicolumn{2}{l}{0.0363} \\
        \toprule
        \multicolumn{2}{c}{Optimal} & \multicolumn{2}{l}{\citet{yao2017dominant}} & \multicolumn{2}{l}{\textbf{12.7400}} & \multicolumn{2}{l}{\textbf{9.1504}} & \multicolumn{2}{l}{\textbf{7.4774}} & \multicolumn{2}{l}{\textbf{6.72205}}  & \multicolumn{2}{l}{\textbf{7.8309}} & \multicolumn{2}{l}{\textbf{7.9840}} & \multicolumn{2}{l}{\textbf{7.9996}}\\
        \toprule
    \end{tabular}
\end{table}

\subsection{Benchmarking \name~Performance}\label{sec:exp_emp}

We benchmark \name~on auction settings with more than one bidder that have been studied in the previous literature~\citep{duetting2023optimal,curry2020certifying,duan2023scalable}.

In Table~\ref{tab:u01_baselines}, we show results on settings where bidders' values come from the uniform distribution in the range $[0,1]$. The details of the setting are represented by $\prescript{n}{m}{U}_\mathtt{Add/Unit}$, where $m$ is the number of items, $n$ is the number of bidders, and $\mathtt{Add/Unit}$ indicates the bidders' valuation type (additive or unit-demand). \name~consistently outperforms all SP baselines  and by a large margin. Moreover, the revenue of \name\ is very close to that achieved by RegretNet; e.g.,  0.8979 in $\prescript{2}{2}{U}_\mathtt{Add}$, while RegretNet gets 0.908 (as a reminder, RegretNet is not fully SP).
 These results respond to the questions raised in the introduction. They confirm  a gap between the optimal auction and the solutions derived by AMA approaches:  even if \name\ is not optimal, it is SP, and thus a valid lower bound on the revenue from the exactly optimal design.
Moreover, under our working hypothesis that \name\ finds designs that are   close to
 optimal, these results also show that the revenue of RegretNet  is  not too much higher than the optimal revenue.
 In Table~\ref{tab:be_baselines}, we also show how \name~generalizes to other distributions, 
 considering
 the $Beta$ distribution with parameters $\alpha = 1, \beta = 2$, as well as an irregular distribution, where each bidder's valuation is 
$U[0, 3]$ with probability $3/4$ and  $U[3, 8]$ with probability $1/4$~\citep{hartline2013mechanism}.
These cases serve to validate the strong performance of \name, For example, Myerson is optimal in the $n=1$ case with irregular valuation (making use of ironing), and \name\ achieves almost the same  revenue.
%

\subsection{Recovering Theoretically Optimal, Multi-Bidder Designs}
\label{sec:exp_yao}

To our knowledge, the only theoretical work that handles settings involving two or more bidders
and items is the study by~\citet{yao2017dominant}. 
The setting considered has   two items, $n>1$ additive bidders, and valuations sampled i.i.d.~from a distribution $Pr\{v_i(j)=a\}=p$ and $Pr\{v_i(j)=b\}=1-p$, for different values of $a$ and $b$. \citet{yao2017dominant} proves the maximum revenue achievable under any SP and IR auction is
$\mathsf{R}_{n,a,b,p} = 2(1-p^n)b + p_0\left[2a-\frac{1-p^2}{p^2}(b-a)\right]_+ + p_1\left[a-\frac{1-p}{2p}(b-a)\right]_+ + p_2\left[a-\frac{1-p}{p}(b-a)\right]_+$,
where $[x]_+=max\{x,0\}$, $p_0=p^{2n}$, $p_1=2np^{2n-1}(1-p)$, and $p_2=2p^{n}(1-p^n-np^{n-1}(1-p))$.
The optimal auction  depends on which of the following intervals contains $b$: 
%
%
$b\in\left(a,\frac{1+p^2}{1+p^2}a\right)$; $\left[\frac{1+p^2}{1+p^2}a, \frac{1}{1-p}a\right)$; $\left[\frac{1}{1-p}a,\frac{1+p}{1-p}a\right)$; $\left[\frac{1+p}{1-p}a,\infty\right).$

We evaluate \name~in all four cases, and with different numbers of bidders. 
We fix $p=0.3$, $a=3$ and select $b$ from $\{3.5, 4, 5, 7\}$. For $a=3$ and $b=4$, we vary $n$ within $\{2,3,5,8\}$. 
In Table~\ref{tab:yao}, we show that \name~accurately recovers the optimal revenue in each setting. By contrast, 
the other deep methods, except for  RegretNet, which is not quite SP,
 do not approximate optimality.
 Fig.~\ref{fig:mech_compare_yao_2x2} further validates that \name's mechanism aligns with the optimal, echoing the auction prescription of Algorithm 1 from~\citet{yao2017dominant}.
Moreover, price adjustments align prices with the optimal solution, notably in Fig.~\ref{fig:mech_compare_yao_2x2}'s first two columns. Although the deep learning phase secures optimal allocation, it requires post-processing for accurate pricing. This highlights the role 
of price adjustment 
beyond fixing menu incompatibility, significantly contributing
here to overall revenue optimality. It is also interesting that AMenuNet comes closer than RegretNet  to the optimal in this setting.\footnote{Since the valuation domain in this example is discrete, a linear program can also solve Yao's problem. 
It has  $O(2^{2n})$ variables and constraints, where the IC constraints enumerate each agent's possible deviations.
For the experiments  in Table~\ref{tab:yao}, with moderate sizes of $n$, this linear program can be efficiently solved.
Our main focus in this paper  is on auctions with continuous valuation, of which there's no known, SP
linear program formulation.  
 However, since we know of no  theoretical results  for multi-buyer multi-item auctions except for~\citet{yao2017dominant}, we include this as an additional validation that \name~is able to learn optimal auctions,
 lending credibility to its performance.}

\begin{figure*}
    \centering
    \includegraphics[width=\linewidth]{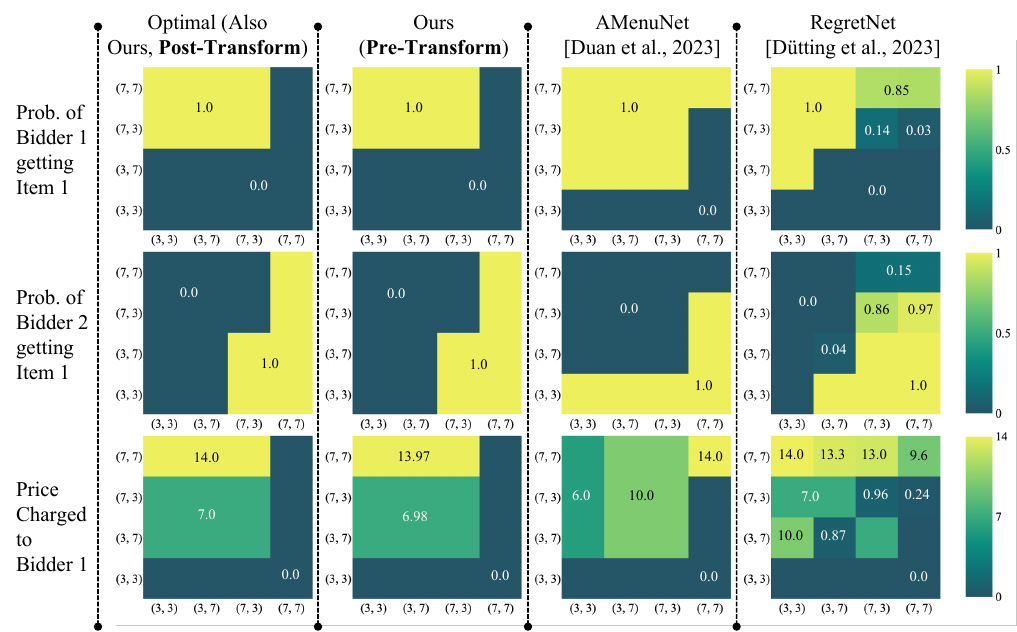}
    \vspace{-2em}
    \caption{2 additive bidders, 2 items, and valuations on support size two~\citep{yao2017dominant} with $n=2$, $a=3$, $b=7$, $p=0.3$. In each subplot, x- and y-axis is the values of Bidder 1 and 2, respectively, on each of the two items. \name~recovers the optimal auction after transformation (Column 1), while other deep methods cannot (Columns 3, 4). The deep learning phase of \name~provides the perfect allocation mechanism (Column 2), but it does not accurately set  prices. The menu transformation resolves this discrepancy (Column 1), with  this price adjustment having a critical role in realizing revenue optimality. Fig.~\ref{fig:mech_compare_yao_2x2_extended} shows the full mechanism.
    \label{fig:mech_compare_yao_2x2}}
\end{figure*}

\subsection{Mechanism Analysis}
\label{sec:exp_viz}

In this section, we seek to better understand the improved performance achieved by \name~by 
visualizing the learned auction rules.

(I) $\prescript{2}{2}{U}_\mathtt{Add}$. In Fig.~\ref{fig:mech_compare_2x2}, we  already
compared the mechanisms learned in this case 
by different methods. 
Here, we take a closer look at the  auction rule learned by \name, using
both 2-dimensional (2D) and 3-dimensional (3D) visualizations 
to gain an 
 understanding of its behavior under different
 bidder valuations.

\textbf{2D Plots Analysis.}\ \ 
In the  2D visualizations of Fig.~\ref{fig:u22_slicing}, 
we vary the values of Bidder 2, specifically $v_2(1)$ and $v_2(2)$, across different subplots. Each subplot plots the allocation to Bidder
1 as its value for item 1 ($v_1(1)$) and item 2 ($v_1(2)$) 
vary along the x- and y-axes, respectively. 
%
The learned auction conforms to a specific structure which is similar to the 
optimal single-bidder structure~\citep{manelli2006bundling}, revealed
here for the first time through \name. \textbf{3D Plots Analysis.}
Moving to the 3D plots in  Fig.~\ref{fig:u22_slicing}, 
we fix the value of item 2 for Bidder 2 ($v_2(2)$)
to different values 
and examine the interaction between $v_1(1)$, $v_1(2)$, and $v_2(1)$ across the x-, y-, and z-axes, respectively. 
We further fix the value of $v_2(1)$ to different values,
and vary $v_1(1)$, $v_1(2)$, and $v_2(2)$ in the second row of the 3D plot.
It is interesting to see that 
these plots are characterized by a clear symmetric structure.

(II) $\prescript{3}{3}{U}_\mathtt{Add}$. In Fig.~\ref{fig:u33_compare}, we explore the learned \name\ auction design for a setting with three additive bidders competing for three items, with their values uniformly distributed  on $[0,1]$.
The analysis   fixes the valuations of  Bidders 2 and 3 at 
$v_2=v_3=(0.2,0.2,0.2)$, and looks
to understand how Bidder $1$'s value for each item,
 represented across three axes,
 influences its own allocation.
It is  interesting to compare with the 
optimal allocation structure for a simpler, single-bidder  three-item setting with the same value distribution, as previously solved by~\citet{giannakopoulos2014duality} (their Straight-Jacket Auction). 
%
We also compare AMenuNet, RegretNet, and  \name~in Fig.~\ref{fig:u33_compare} (b), (c), and (d), respectively. 
AMenuNet  learns a very sub-optimal allocation rule in this setting,
 even allocating an item to Bidder 1  when its value for the item is zero; e.g., for $v_1=(0.6,\ 0,\ 0.3)$.
 This contrasts sharply with the allocation rule obtained by \name, which has a structure that closely mirrors the optimal,  one-bidder structure.
 Correspondingly, as shown in Table~\ref{tab:u01_baselines}, AMenuNet achieves lower revenue (1.6322) than \name~(1.6546).
 RegretNet is closer at capturing the macro-structure in the \name\ design
 but loses the specific detail revealed by \name.
Further tests with \name~examine how the 
allocation to Bidder 1 changes for different  valuations of  Bidders 2 and 3. We  observe that \name~continues to maintain a similar high-level structure in the allocation rule. However, the structure might exhibit asymmetric, as illustrated in Fig.~\ref{fig:u33_compare} (e, f).
\begin{table} [t]
    \caption{Bidder values coming from $Beta$ ($\alpha=1,\beta=2$) and from an irregular distribution ($U[0,3]$ with probability $3/4$ and $U[3,8]$ with probability $1/4$). \citet{myerson1981optimal} with ironing gives the optimal solution in single-item auctions with the irregular distribution, where \name~also achieves near-optimal revenue. In all other cases, where the optimal solution is unknown, our method outperforms all SP baselines.\label{tab:be_baselines} }
    \centering
    \begin{tabular}{CRCRCRCRCRCRCR}
        \toprule
        \multicolumn{2}{c}{\multirow{2}{*}{}} &
        \multicolumn{2}{l}{\multirow{2}{*}{Alg.}} &
        \multicolumn{10}{c}{Setting}\\
        
        \cmidrule(lr){5-6}
        \cmidrule(lr){7-8}
        \cmidrule(lr){9-10}
        \cmidrule(lr){11-12}
        \cmidrule(lr){13-14}
        \multicolumn{2}{c}{} &
        \multicolumn{2}{l}{} &
        \multicolumn{2}{c}{$\prescript{2}{2}{Beta}_\mathtt{Add}$} & 
        \multicolumn{2}{c}{$\prescript{2}{5}{Beta}_\mathtt{Add}$} &
        \multicolumn{2}{c}{$\prescript{3}{2}{Beta}_\mathtt{Add}$} & 
        \multicolumn{2}{c}{$\prescript{3}{1}{IRR}_\mathtt{Add}$}& 
        \multicolumn{2}{c}{$\prescript{3}{2}{IRR}_\mathtt{Add}$} \\
        \midrule
        
        \multicolumn{2}{c}{Ours} & \multicolumn{2}{l}{\name} & \multicolumn{2}{l}{\textbf{0.5575}} & 
        \multicolumn{2}{l}{\textbf{1.5598}} & \multicolumn{2}{l}{\textbf{0.7333}} & \multicolumn{2}{l}{2.3462} & \multicolumn{2}{l}{\textbf{5.0197}}\\
        \midrule
        \multicolumn{2}{c}{\multirow{3}{*}{SP baselines}} & \multicolumn{2}{l}{VCG} & \multicolumn{2}{l}{0.4008} & 
        \multicolumn{2}{l}{0.9992} & \multicolumn{2}{l}{0.6298} & \multicolumn{2}{l}{2.2131} & \multicolumn{2}{l}{4.4423}\\
        \multicolumn{2}{c}{} & \multicolumn{2}{l}{Item-Myerson} & \multicolumn{2}{l}{0.5144} & \multicolumn{2}{l}{1.2814} &
        \multicolumn{2}{l}{0.6775} & \multicolumn{2}{l}{2.3680} & \multicolumn{2}{l}{4.7360}\\
        \multicolumn{2}{c}{} & \multicolumn{2}{l}{AMenuNet} & \multicolumn{2}{l}{0.5494} & \multicolumn{2}{l}{1.5098} & 
        \multicolumn{2}{l}{0.6897} & \multicolumn{2}{l}{2.3369} & \multicolumn{2}{l}{4.9277}\\
        \cmidrule(lr){1-2}
        \cmidrule(lr){3-4}
        \cmidrule(lr){5-6}
        \cmidrule(lr){7-8}
        \cmidrule(lr){9-10}
        \cmidrule(lr){11-12}
        \cmidrule(lr){13-14}
        \multicolumn{2}{c}{\multirow{2}{*}{\makecell{Baselines with\\ IC violation}}} & \multicolumn{2}{l}{RegretNet} & \multicolumn{2}{l}{0.5782} & \multicolumn{2}{l}{1.51340} & 
        \multicolumn{2}{l}{0.7337} & \multicolumn{2}{l}{2.5307} & \multicolumn{2}{l}{5.30577}\\
        \multicolumn{2}{c}{} & \multicolumn{2}{l}{IC Violation} & \multicolumn{2}{l}{0.0042} & \multicolumn{2}{l}{0.01060} & 
        \multicolumn{2}{l}{0.00324} & \multicolumn{2}{l}{0.0078} & \multicolumn{2}{l}{0.01083}\\
        \toprule
        \multicolumn{2}{c}{Optimal} & \multicolumn{2}{l}{Myerson} & \multicolumn{2}{l}{$--$} & \multicolumn{2}{l}{$--$} & \multicolumn{2}{l}{$--$} & \multicolumn{2}{l}{\textbf{2.3680}} & \multicolumn{2}{l}{$--$}\\
        \toprule
    \end{tabular}
\end{table}

\section{Closing Remarks}

In this paper, we have introduced the first expressive (i.e., general), strategy-proof, method for learning revenue-optimizing,
multi-bidder and multi-item auctions under additive and unit-demand valuations.
The innovation of \name\ is to  use
a menu-based representation for the learned, multi-bidder
 auction designs, which provides both exact strategy-proofness as well as interpretability.  
 The technical challenge is to achieve menu compatibility, so that the bidder-optimizing choice for each bidder can be selected simultaneously  without leading to infeasibility (i.e., no item is allocated more than once). We achieve this through a specific choice of incompatibility loss during training and by carefully transforming prices in
learned menus, post-training, to achieve  compatibility margins that are sufficient to ensure compatibility with probability 1 throughout the value domain given Lipschitz smoothness of the learned networks. \name~outperforms all previous SP baselines, and reveals for the first time the structure in what we conjecture
to be essentially optimal auction design for various multi-bidder multi-item settings considered in this paper. There are many directions for future work, including establishing sample complexity bounds similar to~\citet{duetting2023optimal}, further scaling-up the menu transformation method, developing open-access tools for use by theoretical economists, exploring whether RegretNet provides a useful (i.e., empirical) upper-bound on revenue coming from \name, extending duality theory to prove optimality in these multi-bidder settings~\citep{cai2016duality}, extending from additive and unit-demand valuations to problems with combinatorial valuations, improved visualization methods, adopting other objectives, and considering other mechanism design problems.

\newpage
\bibliographystyle{ACM-Reference-Format}
\bibliography{sample-bibliography}

\newpage
\appendix

\section{Experiments}

\subsection{The value of Big \texorpdfstring{$M$}{M}}\label{appx:big_M}

In this section, we discuss the appropriate values for $M$. Recall that the constraint involving $M$ in the MILP is:
\begin{align}
    U_{(\ell)} &\le \vv_{i,(\ell)}^\Tau \bm{\alpha}_i^{(k)} - \beta_i^{(k)} - \Delta\beta_i^{(k)} + (1 - z^{(\ell k)}) M.\label{eq:10b}
\end{align}
For a fixed $\ell$, we want $z^{(\ell k^*)}=1$ for the optimal menu item $k^*$ that gives the max utility to bidder $i$ with value $\vv_{i,(\ell)}$, and $z^{(\ell k)}=0$ for all the other menu items $k\neq k^*$. When $z^{(\ell k^*)}=1$, and also considering constraint, $U_{(\ell)} \ge \vv_{i,(\ell)}^\Tau\bm\alpha_i^{(k)} - \beta_i^{(k)} - \Delta\beta^{(k)}_i + (1-z^{(\ell k)})s_m$,
 we have 
\begin{align}
U_{(\ell)} = \vv_{i,(\ell)}^\Tau \bm{\alpha}_i^{(k^*)} - \beta_i^{(k^*)} - \Delta\beta_i^{(k^*)}.\label{eq:10c}
\end{align}
We expand Constraint \ref{eq:10b} using Eq. \ref{eq:10c}:
\begin{align}
    \vv_{i,(\ell)}^\Tau \bm{\alpha}_i^{(k^*)} - \beta_i^{(k^*)} - \Delta\beta_i^{(k^*)} \le \vv_{i,(\ell)}^\Tau \bm{\alpha}_i^{(k')} - \beta_i^{(k')} - \Delta\beta_i^{(k')} + M,\ \ \ \forall k'\ne k^*.\nonumber
\end{align}
It follows that for feasibility we need,
\begin{align}
    M\ge \vv_{i,(\ell)}^\Tau \bm{\alpha}_i^{(k^*)} - \beta_i^{(k^*)} - \Delta\beta_i^{(k^*)} - (\vv_{i,(\ell)}^\Tau \bm{\alpha}_i^{(k')} - \beta_i^{(k')} - \Delta\beta_i^{(k')}),\ \ \ \forall k'\ne k^*,\nonumber
\end{align}
and $M$ should be at least the gap between the maximum and minimum possible utility of all compatible menu elements. We first focus on the additive valuation function, and derive an upper and lower bound on bidder utility in the following lemma.
%
\begin{lemma}[Menu Element Utility Bounds]\label{lemma:utility_bound}
The following are valid utility bounds on menu elements, for an additive valuation bidder, $m\geq 1$ items, and $v_{\max}>0$ maximum value for a single item,
  \begin{align}
      \min_{i, k, \ell} \left[\vv_{i,(\ell)}^\Tau \bm{\alpha}_i^{(k)} - \beta^{(k)}_i - \Delta\beta^{(k)}_i\right] \ge -m\cdot v_{\max},\\
      \max_{i, k, \ell} \left[\vv_{i,(\ell)}^\Tau \bm{\alpha}_i^{(k)} - \beta^{(k)}_i - \Delta\beta^{(k)}_i\right] \le 2m\cdot v_{\max}.
  \end{align}
\end{lemma}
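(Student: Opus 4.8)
The plan is to split $U_{i,(\ell)}^{(k)} := \vv_{i,(\ell)}^\Tau\bm\alpha_i^{(k)} - \beta_i^{(k)} - \Delta\beta_i^{(k)}$ into a ``gross value'' term and an ``adjusted price'' term $\tilde\beta_i^{(k)} := \beta_i^{(k)} + \Delta\beta_i^{(k)}$, bound each separately and uniformly over $i$, $k$ and $\ell$, and then combine. The gross-value term is immediate: every menu bundle output by $f_{\xi_i}$ satisfies $\bm\alpha_i^{(k)}\in[0,1]^m$ and every grid value satisfies $\vv_{i,(\ell)}\in[0,v_{\max}]^m$, so $0\le \vv_{i,(\ell)}^\Tau\bm\alpha_i^{(k)}\le m\, v_{\max}$. (For unit-demand bidders the extra constraint $\sum_j\alpha_{ij}^{(k)}=1$ tightens this to $v_{\max}$, so the additive case is the binding one.)

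The substance is to show that the adjusted price can be taken in the box $\tilde\beta_i^{(k)}\in[0,\, m\, v_{\max}]$. First, learned prices may be assumed non-negative without loss of generality: the payment rule must map into $\mathbb{R}_{\ge 0}$, so any chosen menu element has a non-negative price, and for never-chosen elements the price can be set to $0$ freely. Second, any (adjusted) price above $m\, v_{\max}$ gives utility $\vv_i^\Tau\bm\alpha_i^{(k)}-\tilde\beta_i^{(k)}<0$ for \emph{every} possible value $\vv_i$, hence that element is dominated everywhere by the null ($K$-th) element and is never selected; clipping such prices down to $m\, v_{\max}$ therefore does not change the realized mechanism. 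Concretely I would add the box constraint $0\le\beta_i^{(k)}+\Delta\beta_i^{(k)}\le m\, v_{\max}$ to each transformation MILP and check that this preserves feasibility: the fallback solution used in Sec.~\ref{sec:method:t} to certify feasibility — raise every non-null element's price until the empty bundle is strictly preferred by the safety margin $s_m$ — only needs each adjusted price to reach $\max_{\ell}\vv_{i,(\ell)}^\Tau\bm\alpha_i^{(k)}+s_m\le m\,v_{\max}+s_m$, so feasibility survives up to the lower-order slack $s_m$, which the (deliberately loose) factor of $2$ in the claimed upper bound $2m\,v_{\max}$ absorbs.

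Combining the two bounds uniformly over $i,k,\ell$ then gives $U_{i,(\ell)}^{(k)}\ge 0-m\,v_{\max}=-m\,v_{\max}$ and $U_{i,(\ell)}^{(k)}\le m\,v_{\max}-0=m\,v_{\max}\le 2m\,v_{\max}$, which is the statement. I expect the adjusted-price bound to be the only real obstacle: one must be careful that clipping prices to a bounded box is simultaneously (i) without loss for the mechanism — over-priced elements are genuinely never chosen — and (ii) without loss for the transformation — the clipped MILP is still feasible, so the transform still returns a compatible menu — and it is the interplay of the clip with the safety margin $s_m$ in the worst-case feasibility argument that needs the most care; everything else is bookkeeping with the interval bounds $\bm\alpha_i^{(k)}\in[0,1]^m$ and $\vv_{i,(\ell)}\in[0,v_{\max}]^m$.
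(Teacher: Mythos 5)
Your decomposition into a gross-value term in $[0,\,m\,v_{\max}]$ and an adjusted-price term is the same starting point as the paper, and your argument that elements with adjusted price above $m\,v_{\max}$ are dominated everywhere by the null (IR) element and can be removed without changing the mechanism is exactly how the paper obtains the lower utility bound. The gap is in how you handle the price box, and your own feasibility check exposes it: to keep the fallback ``price everything out'' certificate of MILP feasibility you must allow adjusted prices up to $\max_\ell \vv_{i,(\ell)}^\Tau\bm\alpha_i^{(k)} + s_m$, which can equal $m\,v_{\max}+s_m$. That slack is \emph{not} absorbed by the factor of $2$ in the upper bound $2m\,v_{\max}$ --- a larger price pushes utility \emph{down}, so the slack attacks the lower bound, giving only $U \ge -(m\,v_{\max}+s_m) < -m\,v_{\max}$, strictly weaker than the first inequality you are asked to prove. (The factor of $2$ in the paper's upper bound comes from the opposite direction: the paper permits adjusted prices as low as $-m\,v_{\max}$, ruling out anything lower because such an element would always be selected and yield negative revenue, so the seller does better offering only the null option; then $m\,v_{\max}-(-m\,v_{\max})=2m\,v_{\max}$.) Your appeal to $p_i:V\to\mathbb{R}_{\ge 0}$ to force non-negative adjusted prices is also shaky: the price network's outputs and the MILP's adjustments are unconstrained reals (the paper explicitly allows negative $\Delta\beta$), so non-negativity is an extra constraint you are grafting onto the algorithm, not a property you can read off the framework.

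The repair is to drop the hard cap and argue purely by exclusion, as the paper does: any element whose adjusted price exceeds $m\,v_{\max}$ is never selected (its utility is negative for every value while the IR element gives $0$), so it can be removed to yield an equivalent mechanism and the fallback solution survives without a box constraint; symmetrically, any element with adjusted price below $-m\,v_{\max}$ would be chosen by every type over every non-negatively priced element and would give the seller negative revenue, so it too can be excluded. This places the adjusted price in $[-m\,v_{\max},\,m\,v_{\max}]$ and, combined with your (correct) bound $0\le \vv_{i,(\ell)}^\Tau\bm\alpha_i^{(k)}\le m\,v_{\max}$, yields exactly the stated bounds $-m\,v_{\max}\le U \le 2m\,v_{\max}$.
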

\begin{proof}
    For the lower bound, suppose that for some $i, \ell, k$, $\vv_{i,(\ell)}^\Tau \bm{\alpha}_i^{(k)} - \beta_i^{(k)} - \Delta\beta_i^{(k)} \ge -m\cdot v_{\max}$ does not hold. Since values are non-negative, we must have $\beta_i^{(k)} + \Delta\beta_i^{(k)} > m \cdot v_{\max}$ for the price. However, when a menu element has a price greater than $m \cdot v_{\max}$, it can never be selected. This is because the value of items a bidder can get is upper bounded by $\vv_{i,(\ell)}^\Tau \bm{\alpha}_i^{(k)} \leq m \cdot v_{max}$. If a price is higher than $m \cdot v_{\max}$, the corresponding menu element will always be less attractive than the IR option (no allocation, and zero price). Therefore, we can construct an equivalent mechanism by removing all menu elements with a price larger than $m \cdot v_{\max}$. 
    
    
    For the upper bound, suppose that for some $i, \ell, k$, $\vv_{i,(\ell)}^\Tau \bm{\alpha}_i^{(k)} - \beta_i^{(k)} - \Delta\beta_i^{(k)} \le 2m \cdot v_{\max}$ does not hold. Since the value of items a bidder can get is upper bounded by $\vv_{i,(\ell)}^\Tau \bm{\alpha}_i^{(k)} \leq m \cdot v_{max}$, we must have $\beta_i^{(k)} + \Delta\beta_i^{(k)} < -m \cdot v_{\max}$ for the price. Such a low price means that the bidder, whatever its value is, will always choose items with a negative price. This is because for any menu element with a positive price, its utility to the bidder is less than the values of all items, which is less than $m \cdot v_{\max}$. In this way, the revenue for the seller is negative, and the seller can do better by only giving the null option (0 allocation and 0 price) in the menu without introducing over-allocation. 
\end{proof}

Based on Lemma~\ref{lemma:utility_bound}, $M \geq 3m \cdot v_{\max}$ should be sufficiently large. The analysis for unit-demand valuations is similar, and we can derive that $M \geq 3v_{\max}$ is sufficiently large in this case. These bounds used for deriving the value of $M$ are  loose. Empirically, it is possible to make the big-$M$ values tighter by considering the actual menu.


\subsection{Setup of Our Framework}\label{appx:exp_gem}
We train the neural network \name~on a single NVIDIA Tesla V100 GPU. 

For the price adjustment process, the ``big-$M$" is set to 10. Since the price charged to buyers are non-negative, the utility of a buyer is at most the upper bound of its valuation times the number of items. Since both Uniform and Beta distributions are upper bounded by $1$, setting $M=10$ suffices for the experiments in this paper. We use the Gurobi optimizer~\citep{gurobi} to solve the MILPs, and we adhere to the default parameters provided by Gurobi. The transformation uses a Linux machine with $2$ vcpus and $64$ GB memory. 

\begin{figure*}
    \centering
    \includegraphics[width=0.9\linewidth]{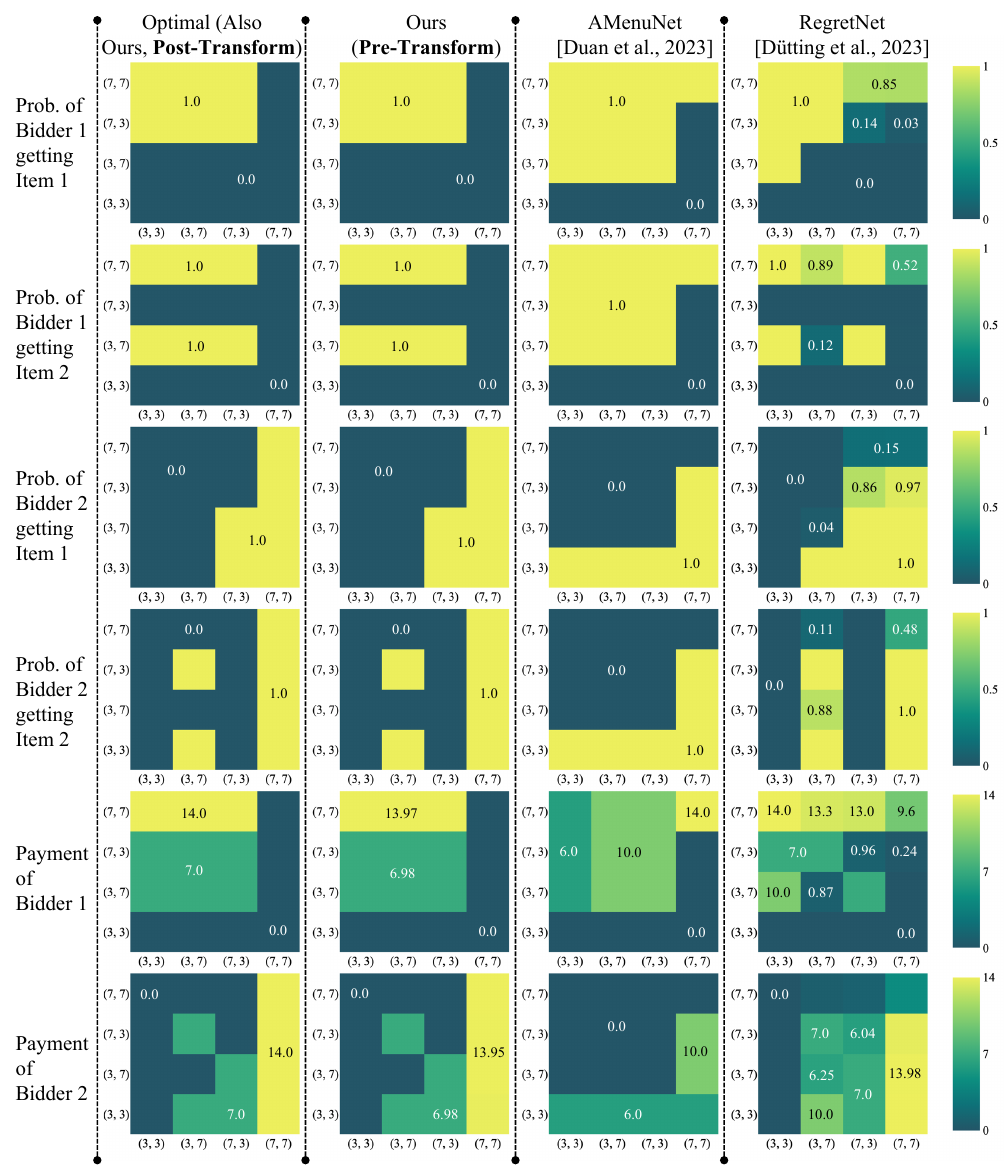}
    \caption{The full mechanisms learned by different methods on the problem with 2 additive bidders, two items, and valuations on support size two~\citep{yao2017dominant} ($n=2$, $a=3$, $b=7$, $p=0.3$). In every subplot, the x-axis represents the bids of the first bidder and the y-axis those of the second bidder for both items. \name~successfully identifies the optimal mechanism post-transformation (Column 1). In contrast, other deep auction learning methods cannot (Columns 3 and 4). The deep learning phase of \name~achieves the ideal allocation mechanism (Column 2); however, it falls short in setting precise prices. This issue is addressed through the menu transformation process (Column 1), demonstrating that the price adjustment plays a pivotal role in achieving optimal revenue.}
    \label{fig:mech_compare_yao_2x2_extended}
\end{figure*}

\subsection{Baselines}\label{appx:exp_baseline}

For Baselines, we use the  GitHub repository in \citet{duan2023scalable} for our AMenuNet baselines results and~\citet{ivanov2022optimal} for the RegretNet and RegretFormer baselines. For AMenuNet, we use the default hyperparameters for the transformer architecture, running a minimum of $2000$ iterations of training steps until it converges. For RegretNet, we use the default setting of 3 hidden layers, each containing $100$ neurons. We run $50$ steps of gradient descent for regret calculation in each iteration, and a total of $800,000$ iterations of training, or less if it converges.
%

\subsection{Adaptive grid}
\label{sec:exp_adaptive_grid}

\begin{figure}
    \centering
    \includegraphics[width=\linewidth]{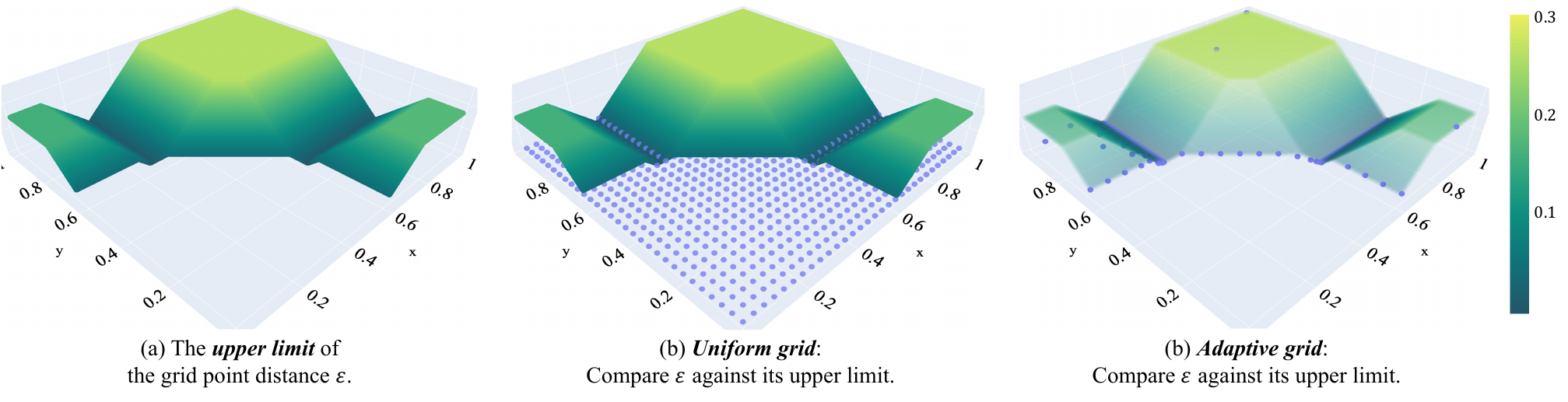}
    \caption{Adaptive grid for the setting with two additive bidders, two items, and i.i.d. values from the uniform distribution on $[0,1]$. Fix $v_2=(0, 0.6)$, and consider the grid of Bidder 1. (a) The upper limit of the distance between two adjacent grid points ($\epsilon$) that can still guarantee menu compatibility. (b) For a uniform, $30\times 30$ grid, $\epsilon=1/30$ is significantly smaller than the upper limit for the majority of the value space, giving us the opportunity to generate an adaptive grid. (c) The adaptive grid: we remove 95\% grid points, and now $\epsilon$ is just slightly below its upper limit.    \label{fig:adaptive_grid}}
\end{figure}

In menu transformation, we use a grid of bidder values to adjust prices
in learned menus. As discussed in Sec.~\ref{sec:method:rdc},
the density (or size) of this grid affects the computational efficiency of this MILP-based adjustment method. In this section, we 
introduce a method to adaptively reduce the density of the grid by exploiting the local Lipschitz smoothness of the learned menu networks.

The major factor that we need to consider when constructing an adaptive grid is to ensure that the proof of our menu compatibility theorem (Theorem~\ref{thm:2_bidder_comp_updated} and~\ref{thm:general_comp}) can still go through. Specifically, the distance in the $\ell_\infty$-norm of two adjacent grid points, $\epsilon$, controls the value of two safety margins $s_f= \epsilon\cdot L_a$ and $s_m=L_a(m \cdot v_{\max}\epsilon+m\epsilon^2)+\epsilon L_p+\epsilon$. The requirement is that the sum bundle of bidders is at least $n\cdot s_f$ smaller than $1$,
and the utility difference between the best and the second best element in a menu is larger than $s_m>0$:
\begin{align}
    & 1 - \sum_i \bm\alpha^*_i \ge n\cdot s_f; \\
    & \forall i, \ u_i^{(k^*)}(\vv_{\shortn i}) - u_i^{(k)}(\vv_{\shortn i}) \ge s_m.\label{eq:eps_upper_limit}
\end{align}
To adaptively adjust the distance between adjacent grid points, we can empirically calculate the actual value of $\tilde{s}_f$ (one minus the actual sum bundle) and $\tilde{s}_m$ (the actual utility difference between the best and the second best element) and thereby calculate the upper bound of $\epsilon$.

We give an example in Fig.~\ref{fig:adaptive_grid}. Fig.~\ref{fig:adaptive_grid}~(a) shows the upper limit of $\epsilon$ in an auction with two additive bidders, two items, and  values with i.i.d.~uniform distribution on $[0,1]$.
 We fix the value of Bidder 2 to $v_2=(0, 0.6)$, and focus on the grid of Bidder 1's values. We do not consider values under which Bidder 1 prefers the IR option (0 allocation with 0 price). This is because, as discussed in Section~\ref{sec:method:rdc},
 these values are safe and cannot introduce infeasibility 
 as long as the price adjustments
 are non-negative.
Fig.~\ref{fig:adaptive_grid}~(b) shows that, if we use a uniform grid of $30\times 30$, then $\epsilon=1/30$ falls beneath the calculated upper limit for the majority of regions within this grid. This means in these regions the network is smooth with a low local Lipschitz constant, and we are free from incompatibility after transforming with a $30\times 30$ grid.

However, we only require that $\epsilon$ is \emph{just} below the surface shown here to guarantee menu compatibility over the entire value space, indicating that it is feasible to construct an adaptive grid  without compromising the menu compatibility guarantee.
%
As shown in Fig.~\ref{fig:adaptive_grid}~(c), the adoption of an adaptive grid results in the elimination of 95\% of the original
grid points, positioning $\epsilon$ just marginally below the upper limit surface throughout the grid.
Such an adaptive grid can greatly reduce the running time required by the price adjustment process as shown in Table.~\ref{tab:eff_strategy}. 

 \subsection{Neural Network Architecture and Training}
 \label{appx: nntraining}

 The neural Network architecture for both the allocation and payment networks uses two hidden layers, with each layer containing 1024 neurons. The architecture for the case with $n$ additive buyers and $m$ items is shown in Figure \ref{fig:gemnet_architecture} and the case for $n$ unit demand buyers and $m$ items is shown in Figure \ref{fig:ud_gemnet_architecture}. The activation function used in each  layer
 is the {\em Gaussian Error Linear Unit (GELU)}~\citep{hendrycks2016gaussian}. We applied spectral normalization to each layer of the network by normalizing the weights using the largest singular value of the weight matrix.
 \begin{figure}[t]
\centering
\scalebox{0.95}{
\includegraphics[width=\textwidth]{./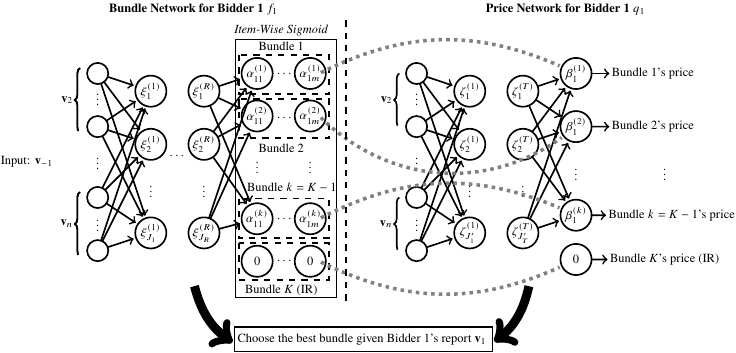}}
\caption{Neural network architecture of \name~ for Bidder 1 in the settings with $n$ additive
bidders and $m$ items. The inputs are the reported types $\mathbf{v}_{-1}$ of other bidders. The Bundle Network $f_1$ outputs $K$ menu items each of size $m$, and the Price Network $q_1$ outputs $K$ item prices each of size $1$. We add an \textit{item-wise sigmoid} on each element in the output of the Bundle Network, ensuring that the allocation probability is well-defined. The network architecture is the same for any other bidder $i'\neq 1$, where the inputs will be $v_{-i'}$, and the rest of the architecture remains the same. An agent $i$'s allocation and payment is the bundle that maximizes the utility for their report type $\mathbf{v}_i$. 
\label{fig:gemnet_architecture}}
\end{figure}

 \begin{figure}[t]
\centering
\scalebox{0.95}{
\includegraphics[width=\textwidth]{./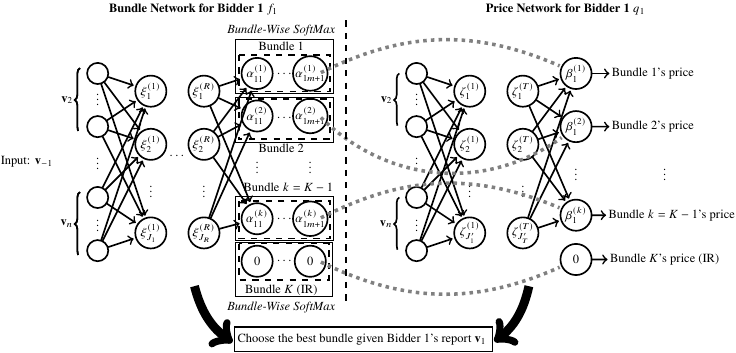}}
\caption{Neural network architecture of \name~ for Bidder 1 in the settings with $n$ unit demand
bidders and $m$ items. The inputs are the reported types $\mathbf{v}_{-1}$ of other bidders. The Bundle Network $f_1$ outputs $K$ menu items each of size $m+1$,
 and the Price Network $q_1$ outputs $K$ item prices each of size $1$. The output space include a dummy element $a_{1(m+1)}^{(k)}$ for each bundle $k$ (the output size of the network is thus $K\times (m+1)$), and we perform a \textit{row (bundle) -wise SoftMax} on the output of the Bundle Network, normalizing each bundle such that the sum of the selection probabilities for all items within a single bundle does not exceed 1. This can be viewed as a  lottery giving the bidder either one or no item (if the dummy item is chosen) according to the bundle probabilities, in line with the unit demand setup. 
 The network architecture is the same for any other bidder $i'\neq 1$, where the inputs will be $v_{-i'}$, and the rest of the architecture remains the same. An agent $i$'s allocation and payment is the bundle that maximizes the utility for their report type $\mathbf{v}_i$.
\label{fig:ud_gemnet_architecture}}
\end{figure}

We initialize the softmax lambda $\lambda_{\textsc{SoftMax}}$ in Eq.~\ref{equ:softmax_in_loss} as $5$, and the scaling factor $\lambda_{\textsc{Incomp}}$ in Eq.~\ref{equ:f_loss} as $0.1$. We train, using a minibatch of size $2^{13}$, for a fix number of iterations (e.g. $20,000$), and stop unless the violation rate is still above the desired threshold, in which case we continue to increase $\lambda_{\textsc{Incomp}}$ until violation rate goes below the desired level. During training, we gradually increase the SoftMax temperature $\lambda_{\textsc{SoftMax}}$ in Eq.~\ref{equ:softmax_in_loss}, as smaller values of $\lambda_{\textsc{SoftMax}}$ help with initial exploration of the network weights and larger values approximate the argmax operation better. We also gradually increase the scaling factor $\lambda_{\textsc{Incomp}}$. An example schedule we used is to increase $\lambda_{\textsc{SoftMax}}$ by 2 times per $3000$ until it reaches $2000$. We also increase $\lambda_{\textsc{Incomp}}$ by $\max(0.01, \mathcal{L}_{\textsc{Incomp}}(\theta)) .$

We evaluate the performance of the network once every $100$ epochs using a test set of $50K$ samples. Of all the checkpoints satisfying the an upper bar ($0.1\%$ or $0.5\%$,  depending on task) of feasibility violation, we pick the one with the best revenue for the subsequent menu transformation.

We normally set the number of menus to be $K=300$, as empirically this yields good performance. When the number of items is large $>5$, we need to increase this number (e.g. $K=1000$) to capture a more rich space of allocation over items. We use Adam as the optimizer and a learning rate of $0.005$. When the revenue and violation rate starts to converge (change in revenue is $\leq 0.03$ and change in violation is $\leq 0.01$) across $2$ consecutive validation evaluations (which occurs once every $200$ epochs), we initiate a linear decay of the learning rate, decreasing it by a factor of $1/10$ every 2,000 iterations.

On a typical task of 3 bidders and 2 items, the training time for $1000$ iterations is an average of $50.88$ seconds. The results provided in the training are typically obtained on $\sim 20,000$ iterations, which takes about $15$ minutes. Compared to RegretNet \citep{duetting2023optimal}, we do not need to run an inner loop of gradient descent to find optimal misreport in order to calculate the regret, resulting in a faster runtime.
We note, though, that the exact training time varies according our objective. A result close to that of baselines (AMenuNet \citep{duan2023scalable} specifically) can typically be obtained in $\sim 5000$ iterations, taking less than 5 minutes. To achieve notable improvement over the baselines, on the other hand, might take longer training time (up to a few hours). We can control this tradeoff at our discretion. \nocite{song2019convergence}

\end{document}